\newtheorem{definition}{Definition}
\newtheorem{theorem}{Theorem}
\newtheorem{lemma}{Lemma}
\newtheorem{proposition}{Proposition}
\newtheorem{remark}{Remark}
\def\BibTeX{{\rm B\kern-.05em{\sc i\kern-.025em b}\kern-.08em
		T\kern-.1667em\lower.7ex\hbox{E}\kern-.125emX}}
\begin{document}
	\title{From Target Tracking to Targeting Track — Part II: Regularized Polynomial Trajectory Optimization}
	\author{Tiancheng Li, \textit{IEEE Senior Member}, Yan Song, Guchong Li, Hao Li
		\thanks{Manuscript created August 2024; \\
			This work was supported in part by the National Natural Science Foundation of China under Grants 62422117 and 62201316 
			and in part by the Fundamental Research Funds for the Central Universities. 
			\\
			Tiancheng Li, Yan Song, Guchong Li and H. Li are with the Key Laboratory of Information Fusion Technology (Ministry of Education), School of Automation, Northwestern Polytechnical University, Xi’an 710129, China, E-mail: t.c.li@nwpu.edu.cn, syzx@mail.nwpu.edu.cn, guchong.li@nwpu.edu.cn, lihao714925@163.com. 
			Hao Li is also with Xi'an Branch of China Academy of Space Technology, Xi'an 710100, China
	}}
	
	\markboth{Journal of \LaTeX\ Class Files, August~2024}%
	{How to Use the IEEEtran \LaTeX \ Templates}
	
	\maketitle
	
	\begin{abstract}
		Target tracking entails the estimation of the evolution of the target state over time, namely the target trajectory. Different from the classical state space model, our series of studies, including this paper, model the collection of the target state as a stochastic process (SP) that is further decomposed into a deterministic part which represents the trend of the trajectory and a residual SP representing the residual fitting error. 
		Subsequently, 
		the tracking problem is formulated as a learning task regarding the trajectory SP 
		for which a key part is to estimate a trajectory FoT (T-FoT) best fitting the measurements in time series. For this purpose, we consider the polynomial curve and address the regularized polynomial T-FoT optimization employing two distinct regularization strategies seeking trade-off between the accuracy and simplicity. 
		One limits the order of the polynomial and then the best choice is determined by grid searching in a narrow, bounded range while the other adopts $\ell_0$ norm regularization for which the hybrid Newton solver is employed. 
		Simulation results obtained in both single and multiple maneuvering target scenarios demonstrate the effectiveness of our approaches. 
	\end{abstract}
	
	\begin{IEEEkeywords}
		Target tracking, trajectory function of time, regularization, polynomial fitting, recursive least squares
	\end{IEEEkeywords}

	\section{Introduction}\label{sec:Introduction}
	\IEEEPARstart{T}{he} online estimation of the trajectory of a moving target such as airplane, robot, missile, etc., called \textit{target tracking}, has been a persistent and prominent research topic, with wide-ranging implications in aerospace, traffic management and defense, among many others \cite{bar2004estimation,vo2015multitarget}. 
	The classic methodology, exemplified by a wide variety of Bayesian filters, is to design a state space model (SSM) \cite{Sarkka13book} consisting of a Markov-jump model to describe the dynamics of the target and a measurement model to relate the measurement with the state of the target. The SSM-based methods can be divided into two major classes, depending on whether the ground truth state is assumed to be a deterministic or random variable, which leads to two distinctive groups of estimators, as shown on the left part of Fig.\ref{fig:Taxonomy}. One is the minimum mean square error (MMSE) driven point state estimator such as the milestone Kalman filter, 
	Gaussian mixture filter, 
	and so on \cite{Li2017AGC}. The other is the Bayesian risk-driven density estimator  such as the particle filter and various random finite set filters \cite{vo2015multitarget}. 
	
	
	\subsection{Relevant Work}
	The actual motion modes are typically unknown in practice and even time-variant, that is, maneuvering \cite{li2005survey}, or simply are too complicated to be sufficiently described \cite{Zhang24L-GBM}. In addition, appropriate models are needed to characterize false, missing and even irregular data \cite{bar2004estimation,Sarkka13book}, which is also intractable. 
	In the absence of accurate a priori knowledge, model mismatching and model identification delay will all lead to an ineluctable estimation error \cite{li2016effectiveness,fan2011impact,xiang2018impact,zhang2022fast}. In particular, various data-driven approaches have recently been proposed based on the SSM or for replacing some calculation components of the SSM for which the readers are referred to  \cite{gao2019long,Ghosh2024Dense} 
	and \cite{Pinto2023DMTT,Zhang24transformer,Zhang24L-GBM} and the references therein. 

	Instead of estimating the {discrete-time point state} of the target based on a meticulously designed SSM
	, we are interested in estimating the {continuous-time trajectory} which is given by a curve function of time (FoT) \cite{li2018joint,li2023target}. 
	This class of data-driven methods is Markov-free and aims to estimate the trajectory directly rather than the state, as illustrated on the right part of Fig. \ref{fig:Taxonomy}. 
	The T-FoT not only provides the dynamics information of the target such as the position, velocity, and acceleration, but also the high-level information such as the movement pattern of the target and smoothness of the trajectory, making it advantageous in comparing with traditional point estimators. Even more importantly, it inherently accommodates unevenly-arrived measurements and stochastic correlation between measurement noises over time \cite{Li25TFoT-part3}.  
	Relevant attempts on continuous-time trajectory modeling can be found in 
	the cutting-edge survey for continuous-time state estimation in a broader realm of robotics \cite{Talbot2024continuoustimestate}. However, most of these approaches, including our previous work \cite{li2016fitting,li2023target}, assume the deterministic real state/trajectory and do not provide uncertainty about the trajectory estimate. This identified research gap will be explored and resolved in this work and the companion papers.  

	\begin{figure}[htbp]
		\centerline{\includegraphics[width=0.9\columnwidth]{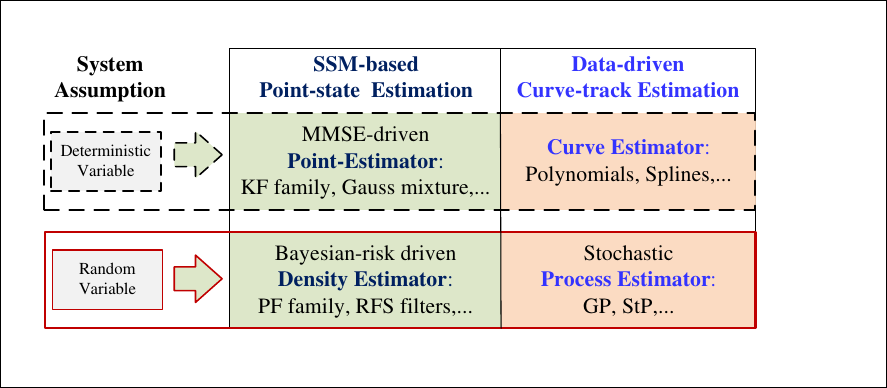}}
		\caption{A taxonomy of existing prominent estimators for target tracking}
		\label{fig:Taxonomy}
	\end{figure}
	
	\subsection{Introduction to Companion Papers}
	
	Our series of studies are founded on modeling the target trajectory using a trajectory function of time (T-FoT) \cite{li2018joint,li2023target}. 
	That is, the evolution of the target state over time is modeled by T-FoT $f:\mathbb{R}^+ \rightarrow \mathcal{X}$ in the spatio-temporal space and 
	the target state at time $t$ is given by  
	\begin{equation} \label{eq:T-FoT}
		{\mathbf{x}_t} = f(t), 
	\end{equation}
	where $t\in \mathbb{R}^+$ denotes the time and $\mathcal{X}$ denotes the state space.
	
	To take advantage of the latent state-over-time correlation and to provide an assessment of the uncertainty associated with the estimate of the T-FoT, within our series of companion papers \cite{Li25TFoT-part2,Li25TFoT-part3} including this one, we further model 
	the collection of the target states in time series as a stochastic process (SP) $\mathcal{F}\triangleq \{\mathbf{x}_t: t\in \mathbb{R}^+ \} $.  
	That is, any T-FoT is a sample path of this SP, that is,
	\begin{equation}
		f\sim \mathcal{F} .
	\end{equation}
	
	Based on the deterministic-stochastic decomposition approach rooted in Wold and Cram\'er's decomposition theorems \cite{Wold1938,Cramer1961,Box1994}, the SP can be decomposed into a deterministic FoT that captures the trend of the SP $\mathcal{F}$ and the residual SP (RSP) contains all the randomness, i.e., 
	\begin{equation}
		f(t)=F(t;\mathbf{C})+\epsilon(t) \label{eq:cramer}
	\end{equation}
	where the deterministic FoT $F(\cdot; \mathbf{C})$ is specified by parameters $\mathbf{C}$ and $\epsilon(\cdot) \sim \mathcal{E}(\cdot; \varTheta) $ denotes the RSP specified by parameters $\varTheta$. Note that $\epsilon(t)$ was interpreted as the fitting error of $F(t; \mathbf{C})$ to $f(t)$ in \cite{li2018joint,li2023target} but no SP model was built for it. 
	
	The contributions of our series of companion papers including three parts are structured as follows.  
	\begin{itemize}
		\item Part I \cite{Li25TFoT-part1} proposed a metric for evaluating the quality of any T-FoT estimate $\hat{f}$, which will be used in this paper. This metric actually provides a distance between any two trajectories given in terms of FoT. 
		\item Part II, as detailed in this paper, addresses the online fitting of a specific polynomial FoT $F(\cdot;\mathbf{C})$ subject to regularized optimization. 
		Two distinct regularization strategies are proposed to strike an optimal balance between fitting accuracy and T-FoT simplicity. 
		\item Part III \cite{Li25TFoT-part3} offers solutions for learning the RSP $\epsilon(\cdot) \sim \mathcal{E}(\cdot; \varTheta) $ (as well as the correlation between the measurement noises) for which two specific representative SPs are considered, respectively: the Gaussain process (GP) and Student's $t$ process (StP).
	\end{itemize}

	\subsection{Contribution and Organization of This Paper} 
	%
	We redefine the classic target tracking problem as an T-FoT-oriented SP learning task based on deterministic-stochastic decomposition \eqref{eq:cramer}. By this, learning the T-FoT $F(t;\mathbf{C})$ and the fitting error $e\left( t \right)$ constitutes the core of our two contributions to SP-based tracking including this paper and the companion paper \cite{Li25TFoT-part3}, respectively. 
	Our first contribution in this paper is to apply the most known polynomial trend decomposition \cite{Hodrick1997postwar,Urbin2012time}, i.e., we assume the trend T-FoT $F(t;\mathbf{C})$ by a polynomial. Although polynomial fitting in a sliding time window has been implemented earlier \cite{Rudd94,Wang94,Anderson-Sprecher96,li2018joint,li2023target}, no SP modeling and decomposition has been recognized. However, the polynomial \cite{Fan96LocalPolynomial} is quite flexible and can be analytically optimized, but faces two notable risks, namely underfitting and overfitting. To avoid this, we propose two distinctive strategies for regularization in order to select a proper polynomial order or a sparse number of polynomial parameters. It should also be noted that the RSP will usually be zero mean if the learned trend T-FoT $F(\cdot;\mathbf{C})$ is chosen as the mean function of the RSP; see the proof given in \cite{Li25TFoT-part3}. Therefore, the learning of $F(\cdot;\mathbf{C})$ plays also an important role in determining the RSP. 
	
	
	The remainder of this paper is organized as follows. Preliminary work is addressed in Section \ref{sec:Preliminaries}, including the constrained T-FoT model and $\ell_0$-regularization optimality. 
	Consequently, two distinctive approximate solvers for the optimization problem are given in Sections \ref{sec:ORLS} and \ref{sec:Newton}, respectively. 
	Extension to the multiple target trajectory case is briefly discussed in Section \ref{sec:extension-MTT}. The simulation results are given in Section \ref{sec:simulation} before the paper is concluded in Section \ref{sec:conclusion}.

	\section{Preliminaries}\label{sec:Preliminaries}

	
	

	\subsection{Polynomial T-FoT}
	By harnessing the principle of Taylor series expansion, it becomes feasible to employ higher-order polynomials to represent any 
	smooth FoT. 
	Polynomial fitting has proven effective in fitting measurements \cite{Tian22PolyFit} and has demonstrated superiority and flexibility in T-FoT modeling in our earlier works \cite{li2018joint,li2018single,li2023target}. The polynomial T-FoT that spans multiple state dimensions is simply given as follows 
	\begin{equation}\label{eq:polynomial}
		F\left ( t;\mathbf{C}_\gamma \right )= \sum_{i=0}^{\gamma }\mathbf{c}_{i}t^i,
	\end{equation}
	where $\gamma $ refers to  the order of the fitting function which may be given exactly in advance or specified with a higher bound, $\mathbf{C}_\gamma=\left\{ \mathbf{c}_i \right\} _{i=0,1,\cdots ,\gamma}$ represents the polynomial trajectory coefficients, $\mathbf{c}_i= \big[c_i^{(1)}, c_i^{(2)}, \cdots, c_i^{(r)}  \big]$, $r$ indicates the dimension in the concerning state space $\mathcal{X}$, 
	$\mathbf{c}_0,\mathbf{c}_1,\mathbf{c}_2$ correspond to the initial position, velocity, and acceleration of the target. 
	
	In many real-world applications, the trajectory is only defined in the position space. This is preferable when measurements are made on the position. 
	Moreover, by computing the derivatives of the position T-FoT with respect to time $t$,  we can estimate the velocity and acceleration of the target as follows:
	\begin{equation}
		\frac{\partial f\left( t \right)}{\partial t}\Bigg|_{t=0}  =\mathbf{c}_1, ~~~
		\frac{\partial ^2f\left( t \right)}{\partial ^2t}\Bigg|_{t=0}  =\mathbf{c}_2. \label{eq2.3}
	\end{equation}
	
	As a rule of thumb, $\gamma = 1$ and $\gamma = 2$ are sufficient to model constant velocity and constant acceleration, respectively. This validates the interpretability of the T-FoT model and an advantage in comparison with some other forms of FoT such as the B-spline \cite{Hadzagic2011batchSpline,Furgale12,Tirado2022Spline} and so on \cite{Pacholska20}.  
	
	\subsection{Sliding Time-Window T-FoT Fitting}
	In order to manage the complexity of the track function, it is common to use a time window to estimate the trajectory parameters. The default time window is given in a sliding manner as $K = [k', k]$, where $k'=\max(1,k-T_\text{w})$, $k$ denotes the current time, $T_\text{w}$ represents the supposed maximum length of the time window, and the operator $\max(a,b)$ produces the maximum between $a$ and $b$. 
	
	
	The real target T-FoT $f(t)$ 
	is measured in discrete time-series, i.e., 
	\begin{equation}\label{eq:measurement}
		\mathbf{y}_k=h_k\left( f(k),\mathbf{v}_k \right), 
	\end{equation}
	where $\mathbf{y}_k \in \mathcal{Y}$ denotes the measurement received at time $k\in \mathbb{N}^+$, $\mathcal{Y}$ denotes the measurement space, $h_k(\cdot,\cdot): \mathcal{X} \times \mathcal{V} \rightarrow \mathcal{Y}$ and $\mathbf{v}_k \in \mathcal{V}$ denote the measurement function and noise at measuring time $k$, respectively.
	
	To minimize the fitting error $\epsilon(t)$ as given in \eqref{eq:cramer}, the parameters of the fitting T-FoT should be determined as follows
	\begin{equation} \label{eq:C_k}
		\hat{\mathbf{C}}_{\gamma}=\underset{\mathbf{C}_{\gamma}}{\text{arg}\min} \mathcal{D}_K(\mathbf{C}_{\gamma}) 
		,
	\end{equation}
	where the data fitting error as adopted in this paper is defined in the least squares (LS) sense  
	\begin{equation}
		\mathcal{D}_K(\mathbf{C}_{\gamma})\triangleq \sum_{t=k'}^k{\lVert \mathbf{y}_t-h_t\left( F\left( t;\mathbf{C}_\gamma \right) ,\bar{\mathbf{v}}_t \right) \rVert^2_{\text{var}(\mathbf{y}_t)}},
	\end{equation} 
	where 
	\(\bar{\mathbf{v}}_t\) denotes the expectation of the measurement error, \(\text{var}(\mathbf{y}_t)\) denotes the variance of the measurement $\mathbf{y}_t$ and \( \left \|  \mathbf{z} \right \|^{2}_P \) denotes the normalized \(\ell_2\)-norm distance, also known as the Mahalanobis distance, as follows 
	\begin{equation}
		\|  \mathbf{z}  \|^{2}_\mathbf{P} \triangleq \mathbf{z}^\mathrm{T}\mathbf{P}^{-1}\mathbf{z}.
	\end{equation}
	It is known that the above weighted LS estimate is also the maximum likelihood estimate if $\mathbf{y}_t-h_t\left( F\left( t;\mathbf{C}_\gamma \right) ,\bar{\mathbf{v}}_t \right)$ follows a zero-mean Gaussian distribution. 
	
	
	To be more general, in the presence of any a-priori model information or system constraint, it may be incorporated into
	the optimization cost function by adding a
	regularization factor $\varOmega _F(\mathbf{C}_{\gamma})$ as a measure of the disagreement of the fitting function with the a-priori model constraint, leading to
	\begin{equation} \label{eq:Const-T-FoT-Op}
		\mathrm{\bf Problem ~ 1}: ~~ \hat{\mathbf{C}}_{\gamma}=\underset{\mathbf{C}_{\gamma}}{\text{arg}\min}\left( \mathcal{D}_K(\mathbf{C}_{\gamma})+\lambda \varOmega _F\left( \mathbf{C}_{\gamma} \right) \right),
	\end{equation}
	where the coefficient $\lambda$ trades off between the data fitting error and the model/constraint fitting error. 
	
	In this paper, we will consider two regularization functions $\varOmega _F\left( \mathbf{C}_{\gamma} \right) \triangleq \gamma +1 $ and $\varOmega _F\left( \mathbf{C}_{\gamma} \right) \triangleq  \rVert \mathbf{C}_{\gamma}  \rVert _0 $, respectively, where  $\lVert \mathbf{C} \rVert _0$ denotes the $\ell_0$ norm of $\mathbf{C}$, counting the number of non-zero elements of $\mathbf{C}$. Two optimization solvers are proposed in the following Sections \ref{sec:ORLS} and \ref{sec:Newton}, respectively. One employs a grid-searching approach based on an upper bound of the optimal order of the polynomial while the other employs the hybrid Newton method.

	\subsection{$\ell_0$-regularization Optimality}
	There are some useful definitions and results on the smoothness, convexity and optimality of the T-FoT. 
	Consider the following optimization,
	\begin{equation}\label{eq:l0_form}
		\underset{\mathbf{C}\in \mathbb{R}^n}{\min}\ g\left( \mathbf{C} \right) +\lambda \lVert \mathbf{C} \rVert _0.
	\end{equation}

	\begin{definition}[L-Smooth]
		Let $L \ge 0$. A function $g$ is said to be L-smooth if there exists $\bigtriangledown g$ (differentiable) and the following inequality holds for all $\mathbf{u},\mathbf{w}\in \mathbb{R}^n$ \cite{beck2017first}, 
		\begin{equation}\label{eq:smooth}
			\lVert \bigtriangledown g\left( \mathbf{u} \right) -\bigtriangledown g\left( \mathbf{w} \right) \rVert _*\le L\lVert \mathbf{u}-\mathbf{w} \rVert,
		\end{equation}
		where $\lVert \cdot \rVert$ and $\lVert \cdot \rVert _*$ are a pair of dual norms, where $\lVert \mathbf{u} \rVert _*=sup\left\{ \mathbf{u}^\text{T} \mathbf{v}|\lVert \mathbf{v} \rVert \le 1 \right\}$, the constant $L$ is called the smoothness parameter.
	\end{definition}
	
	The first useful result on L-smooth functions is the descent lemma, which states that they can be upper bounded by a certain quadratic function. 
	\begin{lemma}[{descent lemma}] Let $g$ be an L-smooth function ($L \ge 0$). Then  for any $\mathbf{u},\mathbf{w}\in \mathbb{R}^n$,
		\begin{equation}
			g\left( \mathbf{w} \right) \le g\left( \mathbf{u} \right) +\left< \bigtriangledown g\left( \mathbf{u} \right) ,\mathbf{w}-\mathbf{u} \right> +\left(L/2\right)\lVert \mathbf{w}-\mathbf{u} \rVert ^2,
		\end{equation}
		where $\left<\cdot ,\cdot \right> $ represents the inner product. 
	\end{lemma}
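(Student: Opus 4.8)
The plan is to reduce the multivariable statement to a one-dimensional integral along the segment joining $\mathbf{u}$ and $\mathbf{w}$, and then to control the integrand using the L-smoothness hypothesis together with the generalized Cauchy--Schwarz inequality for the dual-norm pair. First I would introduce the auxiliary scalar function $\phi(s)\triangleq g\left( \mathbf{u}+s(\mathbf{w}-\mathbf{u}) \right)$ for $s\in[0,1]$, so that $\phi(0)=g(\mathbf{u})$ and $\phi(1)=g(\mathbf{w})$. Since L-smoothness guarantees that $\bigtriangledown g$ exists and $g$ is differentiable, the chain rule gives $\phi'(s)=\left< \bigtriangledown g(\mathbf{u}+s(\mathbf{w}-\mathbf{u})),\,\mathbf{w}-\mathbf{u} \right>$, and the fundamental theorem of calculus yields
\begin{equation}
	g(\mathbf{w})-g(\mathbf{u})=\int_0^1 \left< \bigtriangledown g\left( \mathbf{u}+s(\mathbf{w}-\mathbf{u}) \right),\,\mathbf{w}-\mathbf{u} \right>\,ds .
\end{equation}

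Next I would split off the linear term by adding and subtracting $\bigtriangledown g(\mathbf{u})$ inside the inner product, expressing the right-hand side as $\left< \bigtriangledown g(\mathbf{u}),\,\mathbf{w}-\mathbf{u} \right>$ plus the remainder $\int_0^1 \left< \bigtriangledown g(\mathbf{u}+s(\mathbf{w}-\mathbf{u}))-\bigtriangledown g(\mathbf{u}),\,\mathbf{w}-\mathbf{u} \right>\,ds$. The whole task then collapses to showing that this remainder integral is at most $(L/2)\lVert \mathbf{w}-\mathbf{u} \rVert^2$, which is exactly the excess term appearing in the claimed bound.

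To finish, I would apply the generalized Cauchy--Schwarz inequality $\left< \mathbf{a},\mathbf{b} \right>\le \lVert \mathbf{a} \rVert_*\,\lVert \mathbf{b} \rVert$, which holds for any dual-norm pair directly from the definition $\lVert \mathbf{a} \rVert_*=\sup\{ \mathbf{a}^\mathrm{T}\mathbf{v}:\lVert \mathbf{v} \rVert\le 1 \}$ stated in the L-smooth definition. This bounds the integrand by $\lVert \bigtriangledown g(\mathbf{u}+s(\mathbf{w}-\mathbf{u}))-\bigtriangledown g(\mathbf{u}) \rVert_*\,\lVert \mathbf{w}-\mathbf{u} \rVert$, and the smoothness inequality \eqref{eq:smooth}, applied with arguments $\mathbf{u}+s(\mathbf{w}-\mathbf{u})$ and $\mathbf{u}$, bounds the first factor by $L\,\lVert s(\mathbf{w}-\mathbf{u}) \rVert = Ls\,\lVert \mathbf{w}-\mathbf{u} \rVert$. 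Hence the remainder is at most $L\,\lVert \mathbf{w}-\mathbf{u} \rVert^2\int_0^1 s\,ds=(L/2)\lVert \mathbf{w}-\mathbf{u} \rVert^2$, and combining with the linear term gives the stated inequality.

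I expect the only delicate point to be the presence of a general dual-norm pair rather than the Euclidean norm: the Cauchy--Schwarz step must be used in its dual-norm form, and one must keep track of the fact that the gradient increment is measured in $\lVert \cdot \rVert_*$ while the displacement is measured in $\lVert \cdot \rVert$, precisely matching the way the two norms enter \eqref{eq:smooth}. The remaining ingredients---the chain rule, the fundamental theorem of calculus, and the elementary evaluation of $\int_0^1 s\,ds$---are entirely routine.
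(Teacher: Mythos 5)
Your proof is correct and is exactly the standard argument: the paper itself does not prove this lemma but simply cites \cite[Lemma 5.7]{beck2017first}, and the proof given there proceeds precisely as you describe (fundamental theorem of calculus along the segment, generalized Cauchy--Schwarz for the dual-norm pair, the L-smoothness bound, and $\int_0^1 s\,ds = 1/2$). Nothing further is needed.
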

	The proof can be found in \cite[Lemma 5.7]{beck2017first}.
	
	\begin{definition}[$\ell$-strongly convex]
		A function $g$ is called $\ell$-strongly convex for a given $\ell>0$ if {\rm dom}$(g)$ is convex and the following inequality holds for any $\mathbf{u},\mathbf{w}\in$ {\rm dom}$(g)$ and $\varphi \in \left [ 0,1 \right ] $,
		\begin{align}
			g( \varphi \mathbf{u} +\left(1-\varphi \right)\mathbf{w} ) \le  &  \varphi g\left( \mathbf{u} \right) +\left(1-\varphi \right)g\left( \mathbf{w} \right) \nonumber \\
			& -\frac{\ell}{2}\varphi\left(1-\varphi \right)\lVert  \mathbf{u}-\mathbf{w} \rVert ^2.
		\end{align}
	\end{definition}
	
	\begin{lemma}
		If there exists $\bigtriangledown g$ (differentiable), the function $g$ is strongly convex with a constant $\ell>0$  if
		\begin{equation}\label{eq:convex}
			g\left( \mathbf{w} \right) \ge g\left( \mathbf{u} \right) +\left< \bigtriangledown g\left( \mathbf{u} \right) ,\mathbf{w}-\mathbf{u} \right> +\left(\ell/2\right)\lVert \mathbf{w}-\mathbf{u} \rVert ^2.
		\end{equation}
	\end{lemma}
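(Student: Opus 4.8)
The plan is to derive the convex-combination inequality in the definition of $\ell$-strong convexity directly from the first-order hypothesis \eqref{eq:convex}. The core idea is to apply \eqref{eq:convex} twice, using the convex combination point itself as the common base point, so that the gradient contributions cancel and only the desired quadratic penalty survives. Throughout I fix $\mathbf{u},\mathbf{w}\in \mathrm{dom}(g)$ and $\varphi\in[0,1]$, and set $\mathbf{z}\triangleq \varphi\mathbf{u}+(1-\varphi)\mathbf{w}$; convexity of $\mathrm{dom}(g)$ guarantees $\mathbf{z}\in \mathrm{dom}(g)$, so that \eqref{eq:convex} may legitimately be invoked at $\mathbf{z}$.

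The first step is to instantiate \eqref{eq:convex} with base point $\mathbf{z}$, once toward $\mathbf{u}$ and once toward $\mathbf{w}$:
\begin{align}
g(\mathbf{u}) &\ge g(\mathbf{z})+\langle \bigtriangledown g(\mathbf{z}),\mathbf{u}-\mathbf{z}\rangle+\frac{\ell}{2}\lVert \mathbf{u}-\mathbf{z}\rVert^2, \\
g(\mathbf{w}) &\ge g(\mathbf{z})+\langle \bigtriangledown g(\mathbf{z}),\mathbf{w}-\mathbf{z}\rangle+\frac{\ell}{2}\lVert \mathbf{w}-\mathbf{z}\rVert^2.
\end{align}
I then form the convex combination of these two bounds with weights $\varphi$ and $1-\varphi$. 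The key observation is that $\mathbf{u}-\mathbf{z}=(1-\varphi)(\mathbf{u}-\mathbf{w})$ and $\mathbf{w}-\mathbf{z}=-\varphi(\mathbf{u}-\mathbf{w})$, so the weighted inner-product term equals $\langle \bigtriangledown g(\mathbf{z}),\,\varphi(\mathbf{u}-\mathbf{z})+(1-\varphi)(\mathbf{w}-\mathbf{z})\rangle=\langle \bigtriangledown g(\mathbf{z}),\mathbf{z}-\mathbf{z}\rangle=0$, i.e. the first-order terms vanish exactly. Simultaneously the quadratic weights combine through the elementary identity $\varphi(1-\varphi)^2+(1-\varphi)\varphi^2=\varphi(1-\varphi)$.

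Collecting what remains yields $\varphi g(\mathbf{u})+(1-\varphi)g(\mathbf{w})\ge g(\mathbf{z})+\frac{\ell}{2}\varphi(1-\varphi)\lVert \mathbf{u}-\mathbf{w}\rVert^2$, which after rearrangement is precisely the defining inequality of $\ell$-strong convexity stated above, completing the argument. There is no genuine analytic difficulty here; the only thing requiring care is the bookkeeping that forces the gradient contribution to cancel and the verification of the coefficient identity, which is exactly the mechanism by which the constant $\ell$ is transmitted from the hypothesis to the conclusion. (An equivalent route, if preferred, is to show that $h(\mathbf{C})\triangleq g(\mathbf{C})-\frac{\ell}{2}\lVert \mathbf{C}\rVert^2$ is convex via the standard first-order characterization and then add back the $\ell$-strongly convex quadratic $\frac{\ell}{2}\lVert \cdot\rVert^2$, but the direct computation above is more self-contained.)
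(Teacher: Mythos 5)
Your proof is correct: the two instantiations of \eqref{eq:convex} at the base point $\mathbf{z}=\varphi\mathbf{u}+(1-\varphi)\mathbf{w}$, the exact cancellation of the gradient term by linearity, and the coefficient identity $\varphi(1-\varphi)^2+(1-\varphi)\varphi^2=\varphi(1-\varphi)$ all check out, and the argument is valid for an arbitrary norm since only homogeneity is used. The paper gives no proof of its own but defers to \cite[Theorem 5.24]{beck2017first}, and your argument is essentially the standard one employed there, so nothing further is needed.
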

	The proof can be found in \cite[Theorem 5.24]{beck2017first}.
	
	\begin{definition}[$\tau$-stationary]
		A point $\mathbf{C}\in \mathbb{R}^n$ is called a $\tau$-stationary point  \cite{zhou2021newton} of the problem \eqref{eq:l0_form} if there is a $\tau >0$ satisfying
		\begin{equation}\label{eq:tau_stati_point}
			\begin{aligned}
				\mathbf{C} &\in \text{Prox}_{\tau \lambda \lVert \cdot \rVert _0}\left( \mathbf{C}-\tau \bigtriangledown g\left( \mathbf{C} \right) \right) \\
				& \triangleq \underset{\mathbf{D}\in \mathbb{R}^n}{\text{arg}\min}\frac{1}{2}\lVert \mathbf{D}-\left( \mathbf{C}-\tau \bigtriangledown g\left( \mathbf{C} \right) \right) \rVert ^2+\tau \lambda \lVert \mathbf{D} \rVert _0.
			\end{aligned}
		\end{equation}
	\end{definition}
	
	Consider now the proximal operator $\text{Prox}_{\tau \lambda \lVert \cdot \rVert _0}$ \cite{attouch2013convergence} 
	expressed in a closed form as follows
	\begin{equation}
		\begin{aligned}
			&\text{Prox}_{\tau \lambda \lVert \cdot \rVert _0}\left( \mathbf{C} \right)=  \\
			&\left\{\text{Prox}_{\tau \lambda \lVert \cdot \rVert _0}\left( \mathbf{c}_0 \right),\text{Prox}_{\tau \lambda \lVert \cdot \rVert _0}\left(\mathbf{c}_1 \right),\cdots,\text{Prox}_{\tau \lambda \lVert \cdot \rVert _0}\left( \mathbf{c}_\gamma \right)\right\},
		\end{aligned}
	\end{equation}
	where
	\begin{equation}\label{eq:prox}
		\text{Prox}_{\tau \lambda \lVert \cdot \rVert _0}\left( \mathbf{c}_i \right)  = \begin{cases}
			0,&\left| \mathbf{c}_i \right|<\sqrt{2\tau \lambda}\\
			0\ or\ \mathbf{c}_i,&\left| \mathbf{c}_i \right|=\sqrt{2\tau \lambda} \\
			\mathbf{c}_i,& \left| \mathbf{c}_i \right|>\sqrt{2\tau \lambda}
		\end{cases}
	\end{equation}
	
	The stationarity is a necessary condition for the local optimality of problem \eqref{eq:l0_form}. In case of convex $g$, it is a necessary and sufficient global optimality condition \cite{beck2018proximal}.
	
%
%
%
	
	\section{Polynomial T-FoT with Limiting Order} 
	\label{sec:ORLS}
	
	Polynomial order $\gamma$ is a critical parameter in trajectory fitting \cite{stoica2004model}, and its value may exhibit time-varying characteristics in response to the maneuvering behavior of the target.
	The improper fitting model faces two notorious challenges, a.k.a. underfitting and overfitting due to underestimated and overestimated polynomial order, respectively. As illustrated in Figs. \ref{fig:diff_order_poly} (a) and (c), the former manifests itself when a polynomial of insufficient order is selected 
	while the latter typically transpires when a polynomial of excessively high order is chosen. 

	\begin{figure}[htbp]
		\centerline{\includegraphics[width=0.9\columnwidth,trim=100 2 90 20,clip]{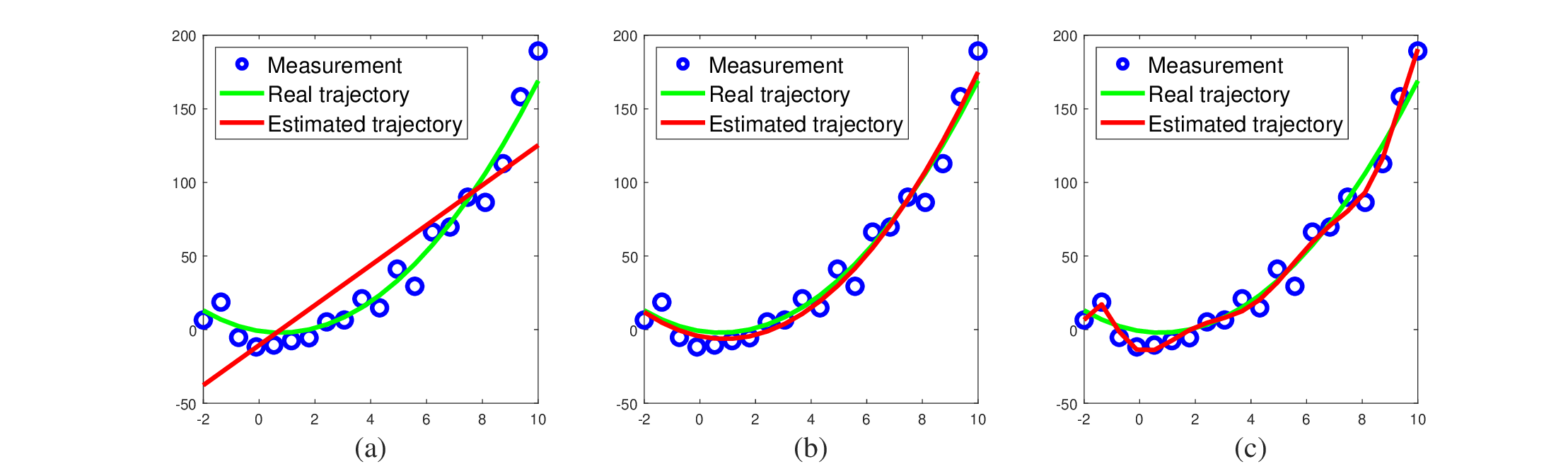}}
		\caption{Fitting by polynomials of different orders.}
		\label{fig:diff_order_poly}
		\vspace{-2mm}
	\end{figure}
	
	Our first proposed approach defines the regularization item as $\varOmega _F\left( \mathbf{C}_{\gamma} \right) \triangleq \gamma +1 $. Then, Problem 1 is reduced to
	\begin{equation}\label{eq:Ck_order}
		\mathrm{\bf Problem ~ 2}: ~~ \hat{\mathbf{C}}_{\gamma}=\underset{\mathbf{C}_{\gamma}}{\text{arg}\min}\left( \mathcal{D}_K(\mathbf{C}_{\gamma}) +\lambda \left( {\gamma + 1} \right) \right).
	\end{equation}
	
	In Problem 2, the fitting error of the polynomial gradually decreases as the order increases, while 
	the regularization term increases by \(\lambda\) with each increment in the order as illustrated in Fig. \ref{fig:order_error}. After a certain point where the lowest overall cost exists, the constant increase of the order penalty will be more significant than the decrease of the data fitting error. 
	As such, we can start 
	with $\gamma =0$ and incrementally increase the order while monitoring the increase in the regularization term, until the lowest overall cost was found at $\gamma^*$. This grid search allows us to efficiently obtain the optimal order $\gamma^*$ in a narrow, bounded range. 
	Therefore, we set \textit{halting condition} for the grid search as if either $\gamma$ reaches the maximum order (cf. Eq. \eqref{eq:gamma_UpBound}) or if the reduction of the fitting error is less than the increase of the order cost at each step, namely,
	\begin{equation} \label{eq:Halt_con}
		|\mathcal{D}_K(\hat{\mathbf{C}}_{\gamma^*}) -\mathcal{D}_K(\hat{\mathbf{C}}_{\gamma^*+1}) | \le \lambda,
	\end{equation}
	where $\mathcal{D}_K(\hat{\mathbf{C}}_{\gamma^*+1})$ denotes the data fitting error corresponding to a polynomial order of $\gamma^*+1$.

	\begin{figure}[htbp]
		\centerline{\includegraphics[width=0.7\columnwidth]{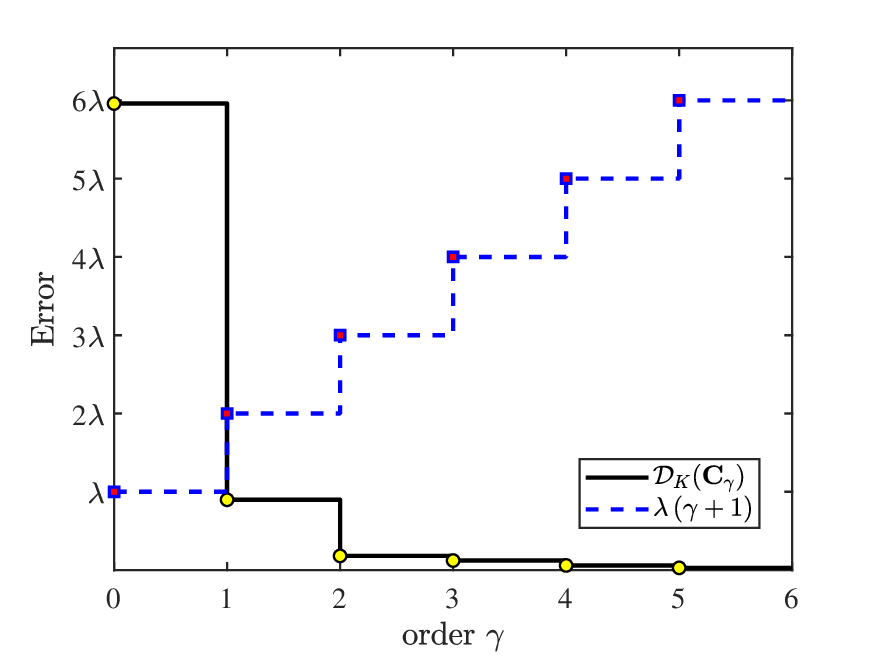}}
		\caption{Illustration of the monotonous decrease of data fitting error $\mathcal{D}_K(\mathbf{C}_{\gamma})$ and the monotonous increase of order cost with the increase of $\gamma$. 
		}
		\label{fig:order_error}
	\end{figure}

	\subsection{Bounds for $\gamma^ *$ and $\lambda$}
	
	\subsubsection{Bounds of \({\gamma^ * }\)}
	We propose to narrow the range of the optimal choice \({\gamma^ * }\) by imposing the following upper bound.

	\begin{proposition}
		In accordance with the principle that the error associated with the optimal order is less than that of other orders, an upper bound for the optimal order $ \gamma^{*}$ is given
		\begin{equation}\label{eq:gamma_UpBound}
			\gamma^{*} \le \frac{\mathcal{D}_K(\hat{\mathbf{C}}_1)}{\lambda } + 1.
		\end{equation}
	\end{proposition}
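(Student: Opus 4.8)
The plan is to exploit only the defining optimality of $\gamma^*$ together with the nonnegativity of the least-squares cost, so that no monotonicity argument for $\mathcal{D}_K$ is even required. First I would write the total objective of Problem 2 at an arbitrary feasible order $\gamma$ as $J(\gamma) \triangleq \mathcal{D}_K(\hat{\mathbf{C}}_{\gamma}) + \lambda(\gamma+1)$, where $\hat{\mathbf{C}}_{\gamma}$ is the LS-optimal coefficient set at that fixed order per \eqref{eq:C_k}. By construction $\gamma^*$ minimizes $J$ over the admissible (nonnegative integer) orders, so in particular $J(\gamma^*) \le J(1)$; this is exactly the principle, invoked in the statement, that the cost at the optimal order is no larger than that at any other order.

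Next I would expand this single inequality into
\[
\mathcal{D}_K(\hat{\mathbf{C}}_{\gamma^*}) + \lambda(\gamma^*+1) \le \mathcal{D}_K(\hat{\mathbf{C}}_1) + 2\lambda .
\]
The second ingredient is that the data fitting error is nonnegative, $\mathcal{D}_K(\hat{\mathbf{C}}_{\gamma^*}) \ge 0$, which holds because each summand is a Mahalanobis distance $\mathbf{z}^{\mathrm{T}}\mathbf{P}^{-1}\mathbf{z}$ with positive-definite weight $\text{var}(\mathbf{y}_t)$. Dropping this nonnegative term from the left-hand side and dividing through by $\lambda > 0$ gives $\gamma^*+1 \le \mathcal{D}_K(\hat{\mathbf{C}}_1)/\lambda + 2$, which rearranges immediately to the claimed bound \eqref{eq:gamma_UpBound}.

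There is essentially no analytical obstacle here: the result is a one-line consequence of optimality plus nonnegativity, and deliberately avoids relying on the (also true) monotone decrease of $\mathcal{D}_K$ in $\gamma$. The only point that deserves care is the choice of the reference order against which $\gamma^*$ is compared. Comparing to $\gamma=1$, rather than to $\gamma=0$, is precisely what yields the stated right-hand side $\mathcal{D}_K(\hat{\mathbf{C}}_1)/\lambda + 1$; the order-$0$ comparison would instead produce $\gamma^* \le \mathcal{D}_K(\hat{\mathbf{C}}_0)/\lambda$. I would therefore also note that $\gamma=1$ must be a feasible order within the window $K$, i.e. that the window contains at least two time indices so that $\hat{\mathbf{C}}_1$ is well defined, which is guaranteed under the sliding-window setup whenever more than one measurement has been collected.
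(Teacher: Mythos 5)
Your proof is correct and is essentially identical to the paper's: both compare the total cost at $\gamma^*$ with that at the reference order $\gamma=1$, obtaining $\mathcal{D}_K(\hat{\mathbf{C}}_{\gamma^*})+(\gamma^*+1)\lambda \le \mathcal{D}_K(\hat{\mathbf{C}}_1)+2\lambda$, and then drop the nonnegative fitting error and divide by $\lambda>0$. Your added remarks on the feasibility of $\gamma=1$ and on the alternative $\gamma=0$ comparison are sensible but do not change the argument.
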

	
	\begin{proof}
		The result is rooted in the fact that the overall fitting error in the case of the optimal value of \({\gamma^ * }\) is no greater than that of the first-order, that is 
		\begin{equation}
			\mathcal{D}_K(\hat{\mathbf{C}}_1)+2\lambda \geq  \mathcal{D}_K(\hat{\mathbf{C}}_{\gamma^{*}})+(\gamma^{*}+1)\lambda.
		\end{equation}
		This together with the fact $\mathcal{D}_K(\hat{\mathbf{C}}_{\gamma^{*}}) \ge 0$ leads to \eqref{eq:gamma_UpBound}.
	\end{proof}

	
	
	\subsubsection{Bounds of \( \lambda \)}
	The regularization parameter \( \lambda \), which encourages sparsity in the polynomial coefficients, plays a vital role in balancing the fitting capacity and the sparsity of the model. 
	In other words, a larger \( \lambda \) exerts a stronger regularization and consequently more parameters are driven toward zero, resulting in a sparser model. 
	Note that model sparsity is not equivalent to a lower polynomial degree, as these two concepts represent distinct aspects of model complexity.

	\begin{proposition}
		The upper and lower bounds for the factor $ \lambda$ are given
		\begin{equation} \label{eq:lambda-range}
			\frac{\mathcal{D}_K(\hat{\mathbf{C}}_1)}{T_\text{w}-2}<\lambda\le \mathcal{D}_K(\hat{\mathbf{C}}_1),
		\end{equation}
		where $T_w$ is the concerning data size. 
	\end{proposition}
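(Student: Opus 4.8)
The plan is to derive both bounds directly from the upper bound on $\gamma^*$ established in the preceding proposition, namely $\gamma^{*} \le \mathcal{D}_K(\hat{\mathbf{C}}_1)/\lambda + 1$, by demanding that the admissible search range for the optimal order be neither degenerate from above nor overfitting from below. In other words, I would treat $\lambda$ as the tuning knob that controls where the proposition's bound on $\gamma^*$ falls, and pin it down by requiring that this bound land inside a sensible order range.

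First, for the upper bound on $\lambda$, I would require that the search range permit the selection of at least a second-order (constant-acceleration) model, which is the lowest order capable of capturing maneuvering motion beyond constant velocity. This amounts to insisting that the upper bound on $\gamma^*$ be no smaller than $2$, i.e.,
\begin{equation}
	\frac{\mathcal{D}_K(\hat{\mathbf{C}}_1)}{\lambda} + 1 \ge 2.
\end{equation}
Since $\lambda > 0$, rearranging immediately yields $\lambda \le \mathcal{D}_K(\hat{\mathbf{C}}_1)$. A larger $\lambda$ would push the upper bound on $\gamma^*$ below $2$, precluding the constant-acceleration model and over-penalizing complexity.

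Next, for the lower bound, I would exploit the fact that $T_\text{w}$ data points can exactly interpolate a polynomial of order $T_\text{w}-1$, driving the fitting error $\mathcal{D}_K$ to zero and hence overfitting. To retain at least one residual degree of freedom and keep the least-squares problem over-determined, the optimal order must satisfy $\gamma^* \le T_\text{w}-2$, which I would enforce by requiring the upper bound on $\gamma^*$ to lie strictly below the interpolation order $T_\text{w}-1$:
\begin{equation}
	\frac{\mathcal{D}_K(\hat{\mathbf{C}}_1)}{\lambda} + 1 < T_\text{w} - 1.
\end{equation}
Using $T_\text{w} - 2 > 0$, this rearranges to $\lambda > \mathcal{D}_K(\hat{\mathbf{C}}_1)/(T_\text{w}-2)$. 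Combining the two displays gives the stated interval \eqref{eq:lambda-range}. Note that the strict lower inequality and the non-strict upper inequality in \eqref{eq:lambda-range} are exactly the ones produced by the $<$ and $\ge$ used at the two endpoints.

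The main obstacle I anticipate is not the algebra, which is elementary, but justifying the two threshold choices—reaching order $2$ from below and stopping short of the interpolation order $T_\text{w}-1$ from above—as the correct design targets. These are modeling decisions rather than forced consequences: the lower threshold encodes the wish to accommodate at least constant-acceleration dynamics, while the upper threshold encodes the requirement that the fit remain over-determined so that $\mathcal{D}_K(\hat{\mathbf{C}}_{\gamma^*})$ stays strictly positive. I would therefore make explicit that both bounds are obtained by substituting these admissible endpoints of $\gamma^*$ into the proposition's bound and solving for $\lambda$, and I would remark that the range is nonempty precisely when $T_\text{w} > 3$, which is the regime of practical interest.
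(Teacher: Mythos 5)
Your proof is correct and, for the lower bound, follows exactly the paper's route: both you and the authors forbid overfitting by requiring the upper bound $\mathcal{D}_K(\hat{\mathbf{C}}_1)/\lambda + 1$ on $\gamma^*$ to stay strictly below the interpolation order $T_\text{w}-1$, and rearrange to get $\lambda > \mathcal{D}_K(\hat{\mathbf{C}}_1)/(T_\text{w}-2)$. The one place you diverge is the upper endpoint: the paper simply stipulates $\lambda \le \mathcal{D}_K(\hat{\mathbf{C}}_1)$ as a design requirement ("the selection of $\lambda$ should satisfy the condition that it is smaller than the first-order error") with no supporting argument, whereas you derive it by demanding that the bound on $\gamma^*$ be at least $2$ so that a constant-acceleration model remains admissible. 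Your version is the more principled of the two, since it ties both endpoints of the $\lambda$ interval to the same inequality $\gamma^{*} \le \mathcal{D}_K(\hat{\mathbf{C}}_1)/\lambda + 1$; the paper's version buys brevity at the cost of leaving the upper bound unmotivated. You are also right to flag, as the paper does not, that both thresholds are modeling decisions rather than forced consequences, and that the interval is nonempty only when $T_\text{w}>3$; these caveats would strengthen the proposition as stated.
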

	
	\begin{proof}
		Firstly, it emphasizes that \( \lambda \) should be chosen to be greater than zero, ensuring the effectiveness of the regularization term in promoting sparsity. Secondly, the selection of \( \lambda \) should satisfy the condition that it is smaller than the first-order error $\mathcal{D}_K(\hat{\mathbf{C}}_1)$, 
		that is, $ \lambda  \le \mathcal{D}_K(\hat{\mathbf{C}}_1)$.  
		
		Furthermore, 
		overfitting occurs when the polynomial order is set equal to or larger than $T_w-1$ when the polynomial passes through all the measurement points. Therefore, to avoid this, it is required that
		\begin{equation}
			\frac{\mathcal{D}_K(\hat{\mathbf{C}}_1)}{\lambda}+1<T_\text{w}-1.
		\end{equation}
		which results in the lower bound $\lambda >\frac{\mathcal{D}_K(\hat{\mathbf{C}}_1)}{T_\text{w}-2}$.
	\end{proof}


	\subsection{Order-Recursive Least Squares (ORLS) Solver for Linear Measurement} 
	
	We write the vector form of the T-FoT with regard to all time-instants in the time-window $K=[k',k]$ as follows, 
	\begin{equation} \label{eq:linearized-measurement}
		F\left(K;\mathbf{C}_{\gamma} \right)  = \mathbf{Z}_{\gamma}\mathbf{C}_{\gamma},
	\end{equation}
	where $\mathbf{Z}_{\gamma}$ denotes a Vandermonde matrix of size $T_\text{w}\times (\gamma+1)$, 
	structured as follows,
	\begin{equation}
		\mathbf{Z}_{\gamma} \triangleq \left[ \begin{matrix}
			1&		{k'}&		\cdots&		{k'}^{\gamma}\\
			1&		{k'+1}&		\cdots&		{(k'+1)}^{\gamma}\\
			\vdots&		\vdots&		\ddots&		\vdots\\
			1&		k&		\cdots&		{k}^{\gamma}\\
		\end{matrix} \right].
	\end{equation}
	Then, given linear measurement with additive, zero-mean noise, namely $\mathbf{y}_k= \mathbf{x}_k + \mathbf{v}_k, \bar{\mathbf{v}}_t =\mathbf{0}$, we get 
	\begin{align} \label{eq:D_k(C)-Linear}
		\mathcal{D}_K(\mathbf{C}_{\gamma})&=\lVert \mathbf{Y}_K -\mathbf{Z}_{\gamma}\mathbf{C}_{\gamma} \rVert^2_{\text{var}(\mathbf{Y})} \nonumber \\
		&=\left(\mathbf{Y}_K -\mathbf{Z}_{\gamma}\mathbf{C}_{\gamma}\right)^\text{T}{\text{var}(\mathbf{Y})}^{-1}\left(\mathbf{Y}_K -\mathbf{Z}_{\gamma}C_{\gamma}\right),
	\end{align}
	where the  position measurements is denoted as
	\begin{equation}\label{eq:PositionMeasuremSet}
		\begin{aligned}
			\mathbf{Y}_K \triangleq &\left[ \begin{matrix}
				\mathbf{y}_{k'},&		\mathbf{y}_{k'+1},&		\cdots,&		\mathbf{y}_k\\
			\end{matrix} \right] ^{\text{T}}.
		\end{aligned}
	\end{equation}

	In minimizing $\mathcal{D}_K(\mathbf{C}_{\gamma})$, if the uncertainty of the observed data is time-invariant (i.e., the noise is homogeneous), the term $\text{var}(\mathbf{y}_t)$ can be disregarded, resulting in a standard LS fitting. Otherwise, we can factorize it as ${\text{var}(\mathbf{Y})} ^{-1}=\mathbf{A}^{T}\mathbf{A}$ with positive definite $T_\text{w} \times T_\text{w} $ matrix $\mathbf{A}$ according to the Cholesky factorization method \cite{Ye21Cholesky}. 
	This yields 
	\begin{align}
		\mathcal{D}_K(\mathbf{C}_{\gamma})&=\left(\mathbf{Y}_K -\mathbf{Z}_{\gamma}\mathbf{C}_{\gamma}\right)^\text{T}\mathbf{A}^{T}\mathbf{A}\left(\mathbf{Y}_K -\mathbf{Z}_{\gamma}\mathbf{C}_{\gamma}\right) \nonumber \\
		&=\left(\tilde{\mathbf{Y}}_K -\tilde{\mathbf{Z}}_{\gamma}\mathbf{C}_{\gamma}\right)^\text{T}\left(\tilde{\mathbf{Y}}_K -\tilde{\mathbf{Z}}_{\gamma}\mathbf{C}_{\gamma}\right),
	\end{align}
	where $\tilde{\mathbf{Y}}_K \triangleq \mathbf{A}\mathbf{Y}_K$ and $\tilde{\mathbf{Z}}_{\gamma} \triangleq \mathbf{A}\mathbf{Z}_{\gamma}$. 
	
	So far, $\mathcal{D}_K(\mathbf{C}_{\gamma})$ is now formulated as conventional LS problem 
	for which the solution is 
	\begin{equation}\label{eq:hat(C)-gamma}
		\hat{\mathbf{C}}_{\gamma}  = {\left( {{\tilde{\mathbf{Z}}_{\gamma}^\text{T}}\tilde{\mathbf{Z}}_{\gamma}} \right)^{ - 1}}{\tilde{\mathbf{Z}}_{\gamma}^\text{T}}\tilde{\mathbf{Y}}_K .
	\end{equation}
	\begin{equation}
		\mathcal{D}_K(\hat{\mathbf{C}}_{\gamma})=\left(\tilde{\mathbf{Y}}_K -\tilde{\mathbf{Z}}_{\gamma}\hat{\mathbf{C}}_{\gamma}\right)^\text{T}\left(\tilde{\mathbf{Y}}_K -\tilde{\mathbf{Z}}_{\gamma}\hat{\mathbf{C}}_{\gamma}\right).
	\end{equation}
	
	We then employ the ORLS algorithm \cite[Sec. 8]{steven1993fundamentals} to iteratively update the parameters to the optimal one. 
	Each time, we increase the order by one, adding a column to the time matrix, resulting in a new matrix that can be partitioned as follows
	\begin{equation}
		\tilde{\mathbf{Z}}_{\gamma +1}=\left[ \begin{matrix}
			\tilde{\mathbf{Z}}_{\gamma}&		\tilde{\mathbf{z}}_{\gamma +1}\\
		\end{matrix} \right] ,
	\end{equation}
	where $\tilde{\mathbf{z}}_{\gamma +1}=\mathbf{A}\mathbf{z}_{\gamma +1}$, $\mathbf{z}_{\gamma +1}=\left[k'^{\gamma+1}, (k'+1)^{\gamma+1}, \cdots, k^{\gamma+1} \right]^\text{T}$.
	
	To update $\hat{\mathbf{C}}_{\gamma}$ and $\mathcal{D}_K(\hat{\mathbf{C}}_{\gamma})$ we use \cite[Eq.(8.28)]{steven1993fundamentals}
	\begin{equation}\label{eq:c_recurisive}
		\hat{\mathbf{C}}_{\gamma +1}=\begin{bmatrix}
			\hat{\mathbf{C}}_{\gamma}-\frac{\left( \tilde{\mathbf{Z}}_{\gamma}^{T}\tilde{\mathbf{Z}}_{\gamma} \right) ^{-1}\tilde{\mathbf{Z}}_{\gamma}^{T}\tilde{\mathbf{z}}_{\gamma +1}\tilde{\mathbf{z}}_{\gamma +1}^{T}\mathbf{W}_{\gamma}^{\bot}\tilde{\mathbf{Y}}_K }{\tilde{\mathbf{z}}_{\gamma +1}^{T}\mathbf{W}_{\gamma}^{\bot}\tilde{\mathbf{z}}_{\gamma +1}}\\
			\frac{\tilde{\mathbf{z}}_{\gamma +1}^{T}\mathbf{W}_{\gamma}^{\bot}\tilde{\mathbf{Y}}_K }{\tilde{\mathbf{z}}_{\gamma +1}^{T}\mathbf{W}_{\gamma}^{\bot}\tilde{\mathbf{z}}_{\gamma +1}}
		\end{bmatrix},
	\end{equation}
	where 
	\begin{equation}
		\mathbf{W}_\gamma^{\bot} \triangleq \mathbf{I} - {\tilde{\mathbf{Z}}_\gamma}\mathbf{B}_{\gamma}\tilde{\mathbf{Z}}_\gamma^\text{T}.
	\end{equation}
	where $\mathbf{I}$ denotes the identity matrix. $\mathbf{W}_\gamma^{\bot}$ is the projection matrix onto the subspace orthogonal to that spanned by the columns of $\tilde{\mathbf{Z}}_{\gamma}$ and the abbreviation
	\begin{equation}\label{eq4.7}
		\mathbf{B}_{\gamma} \triangleq \left( \tilde{\mathbf{Z}}_{\gamma}^{T}\tilde{\mathbf{Z}}_{\gamma} \right) ^{-1},
	\end{equation}
	is updated recursively as follows \cite[Eq.(8.30)]{steven1993fundamentals}
	\begin{equation}\label{eq:D_recurisive}
		{\mathbf{B}_{\gamma + 1}} = \begin{bmatrix}
			{{\mathbf{B}_\gamma} + \frac{{{\mathbf{B}_\gamma}\tilde{\mathbf{Z}}_\gamma^\text{T}{\tilde{\mathbf{z}}_{\gamma + 1}}\tilde{\mathbf{z}}_{\gamma + 1}^\text{T}{\tilde{\mathbf{Z}}_\gamma}{\mathbf{B}_\gamma}}}{{\tilde{\mathbf{z}}_{\gamma + 1}^\text{T}{\mathbf{W}_\gamma^{\bot}}{\tilde{\mathbf{z}}_{\gamma + 1}}}}}&{ - \frac{{{\mathbf{B}_\gamma}\tilde{\mathbf{Z}}_\gamma^\text{T}{\tilde{\mathbf{z}}_{\gamma + 1}}}}{{\tilde{\mathbf{z}}_{\gamma + 1}^\text{T}{\mathbf{W}_\gamma^{\bot}}{\tilde{\mathbf{z}}_{\gamma + 1}}}}}\\
			{ - \frac{{\tilde{\mathbf{z}}_{\gamma + 1}^\text{T}{\tilde{\mathbf{Z}}_\gamma}{\mathbf{B}_\gamma}}}{{\tilde{\mathbf{z}}_{\gamma + 1}^\text{T}{\mathbf{W}_\gamma^{\bot}}{\tilde{\mathbf{z}}_{\gamma + 1}}}}}&{\frac{1}{{\tilde{\mathbf{z}}_{\gamma + 1}^\text{T}{\mathbf{W}_\gamma^{\bot}}{\tilde{\mathbf{z}}_{\gamma + 1}}}}}
		\end{bmatrix}.
	\end{equation}
	The LS fitting error is reduced by \cite[Eq.(8.31)]{steven1993fundamentals}
	\begin{equation}\label{eq:MIN_recurisive}
		\mathcal{D}_K(\hat{\mathbf{C}}_{\gamma})-\mathcal{D}_K(\hat{\mathbf{C}}_{\gamma+1}) = \frac{{{{\left( {\tilde{\mathbf{z}}_{\gamma + 1}^\text{T}{\mathbf{W}_\gamma^{\bot}}\tilde{\mathbf{Y}}_K } \right)}^2}}}{{\tilde{\mathbf{z}}_{\gamma + 1}^\text{T}{\mathbf{W}_\gamma^{\bot}}{\tilde{\mathbf{z}}_{\gamma + 1}}}}.
	\end{equation}
	
	

	Employing ORLS avoids the computationally expensive matrix inversion at each iteration. The iteration to find the optimal $\lambda^*$ will be terminated with the monitoring of the error reduction as given in \eqref{eq:MIN_recurisive}, according to the halting condition \eqref{eq:Halt_con}. We summarize the process of the algorithm in Algorithm. \ref{alg_ORLS}.
	
	\begin{algorithm}[!ht]
		\caption{The $\gamma$-limiting ORLS algorithm}\label{alg_ORLS}
		\KwIn{Sensor measurement \(\mathbf{Y}_K\) in the time-window $K$, T-FoT parameters: \( \lambda \), $\gamma=0$.}
		\KwOut{Optimal polynomial order \(\gamma^{*}\), polynomial coefficient \(\hat {\mathbf{C}}_{\gamma^{*}}\).}
		\begin{algorithmic}[1]
			\STATE {Initialize \(\hat {\mathbf{C}}_{0}\), \(\mathcal{D}_K (\hat{\mathbf{C}}_0) \) and \(\mathbf{B}_0\);}
			\WHILE {the halting conditions are not violated}
			{
				\STATE {\(\gamma \leftarrow \gamma+1\);}
				\STATE {calculate \eqref{eq:hat(C)-gamma} and \eqref{eq:c_recurisive} for updating parameters;}
				\STATE {calculate 
					\eqref{eq:MIN_recurisive} for checking the halting condition;}
			}
			\ENDWHILE
			\STATE {\textbf{return} \(\gamma^{*}=\gamma, \hat{\mathbf{C}}_{\gamma^{*}} = \hat{\mathbf{C}}_{\gamma}\) ;}
		\end{algorithmic}
	\end{algorithm}

	\subsection{Extension to Nonlinear Measurement} \label{sec:nonlinearMeasurement}
	Nonlinear measurements generally do not admit a closed-form LS solution for the T-FoT optimization problem. Instead, the corresponding nonlinear LS optimization must be based on computationally intensive iterative/numerical solvers. Two alternatives that can mitigate this challenge 
	are considerable. One converts the non-linear measurement to the linear position measurements as discussed in \cite[Sec. 3]{li2023target}. This is also referred to as the transformation of parameters \cite[Ch.8.9]{steven1993fundamentals}. This, however, applies only to the determined or over-determined measurement system, i.e., the dimension of the measurement is no lower than the dimension of the state space. 
	For example, the range-bearing measurement can be converted to position measurement for position T-FoT fitting \cite{li2023target}. 
	The other choice is to linearize the nonlinear measurement, 
	similar with what has been done within the prevalent extended Kalman filter (KF) in extending the KF. Reconsider the measurement function \eqref{eq:measurement} linearized as follows: 
	\begin{equation}
		\mathbf{y}_k=\hat{\mathbf{y}}_{k-1}+\mathbf{J}_k\left(\mathbf{x}_k-\hat{\mathbf{x}}_{k-1}\right)+\mathbf{v}_k,
	\end{equation}
	where $\hat{\mathbf{y}}_{k-1}=h(\hat{\mathbf{x}}_{k-1}, \bar{\mathbf{v}_{k+1}})$, $\hat{\mathbf{x}}_{k-1}$ denotes the state estimate extracted from the T-FoT by $\hat{\mathbf{x}}_{k-1} = F(k-1; \mathbf{C}_{k-1})$ based on the parameters $\mathbf{C}_{k-1}$ obtained at time $k-1$, and $\mathbf{J}_k$ denotes the Jacobian matrix of the partial derivatives of the measurement function $h_k(\cdot)$. 
	
	Then, given $\hat{\mathbf{x}}_{t},\hat{\mathbf{y}}_{t}$ estimated at time $t=k',k'+1,...,k-1$ and all of the measurements in the time-window $\mathbf{y}_k, k\in K$, we define  
	$ \mathbf{\delta Y}_K \triangleq [
	\mathbf{y}_{k'}-\hat{\mathbf{y}}_{k'-1} + \mathbf{J}_{k'}\hat{\mathbf{x}}_{k'-1}, \mathbf{y}_{k'+1}-\hat{\mathbf{y}}_{k'} + \mathbf{J}_{k'+1}\hat{\mathbf{x}}_{k'}, 	\cdots,		\mathbf{y}_{k}-\hat{\mathbf{y}}_{k-1} + \mathbf{J}_{k}\hat{\mathbf{x}}_{k-1} 
	] ^{\text{T}}$
	and reformulate \eqref{eq:D_k(C)-Linear} as 
	\begin{align} \label{eq:D_k(C)-ExLinear}
		\mathcal{D}_K(\mathbf{C}_{\gamma})&=\lVert  \mathbf{\delta Y}_K - \mathbf{J}_K  \mathbf{Z}_{\gamma}\mathbf{C}_{\gamma} \rVert^2_{\text{var}( \mathbf{\delta Y})},
	\end{align}
	where $\mathbf{J}_K \triangleq \text{diag} (\mathbf{J}_{k'}, \mathbf{J}_{k'+1},\cdots, \mathbf{J}_{k})$ is a diagonal matrix with diagonal elements $\mathbf{J}_{k'}, \mathbf{J}_{k'+1},\cdots, \mathbf{J}_{k}$ in order. 
	
	Comparing \eqref{eq:D_k(C)-Linear} with \eqref{eq:D_k(C)-ExLinear}, it can be seen that $\mathbf{Y}_K$ and $\mathbf{Z}_{\gamma}$ were replaced by $\delta \mathbf{Y}_K$ and $\mathbf{J}_K  \mathbf{Z}_{\gamma}$ in the latter, respectively. The $\gamma$-limiting ORLS algorithm can be similarly used to estimate $\mathbf{C}_{\gamma}$. The detail is omitted.   
	
	\section{$\ell_0$-regularized T-FoT Fitting}\label{sec:Newton}
	

	For a judicious equilibrium  between promotion of sparsity and fitting accuracy in the T-FoT model \eqref{eq:Const-T-FoT-Op},
	a prevalent approach is to enforce sparsity, typically characterized by the $\ell_0$ norm \cite{zhou2021newton} regularization. In other words, 
	Problem 1 is specified using the $\ell_0$-norm in accordance with the law of parsimony, leading to a new form: 
	\begin{equation}\label{eq:C_k_l0}
		\mathrm{\bf Problem ~ 3}: ~~ \hat{\mathbf{C}}_{\gamma}=\underset{\mathbf{C}_{\gamma}}{\text{arg}\min}\left( \mathcal{D}_K(\mathbf{C}_{\gamma})+\lambda \lVert \mathbf{C}_{\gamma} \rVert _0 \right).
	\end{equation}
	
	
	This $\ell_0$-regularized optimization problem is nonconvex, noncontinuous, and NP-hard, 
	no matter what $\mathcal{D}_K(\mathbf{C}_{\gamma})$. 
	Existing approaches 
	can be broadly categorized into two main groups: model transformation and direct processing \cite{some2020zhao}. 
	The model transformation method focuses on converting the nonconvex and nonsmooth $\ell_0$ norm 
	into a more tractable form, either by transforming it into a convex function like the $\ell_1$-regularization \cite{kim2007interior} 
	or an easy-handling nonconvex function 
	like the $\ell_{1/2}$-regularization \cite{xu2012lell} 
	and $\ell_p$ norm relaxation, where $0 < p < 1$ \cite{fung2011equivalence}. 
	These advancements have led to the emergence of various highly efficient first-order algorithms, including the iterative shrinkage-thresholding algorithm \cite{blumensath2008iterative}, 
	augmented Lagrangian method, and alternating direction multiplier method (ADMM) \cite{he2020optimal}. 
	
	
	
	The direct processing approach to $\ell_0$ optimization aims to bypass the need for exact recovery verification and broadens the applicability of the findings. 
	Representative solutions include the iterative hard-thresholding algorithm \cite{blumensath2009iterativecompressed}, active set Barzilar-Borwein algorithm \cite{cheng2020active}, 
	variational approach \cite{ito2013variational}, to name a few. 
	In what follows, we detail a direct processing solver based on the hybrid Newton approach to obtain the optimal solution of \eqref{eq:C_k_l0}.  For simplicity, we will omit the subscript of $\gamma$ (which can be set to any positive integer).
	
	For the problem \eqref{eq:C_k_l0}, a $\tau$-stationary point is  defined when a $\tau > 0$ exists that satisfies
	\begin{equation}\label{eq:eq19_stati_point}
		\mathbf{C} \in \text{Prox}_{\tau \lambda \lVert \cdot \rVert _0}\left( \mathbf{C}-\tau \bigtriangledown \mathcal{D}_K\left( \mathbf{C} \right) \right).
	\end{equation}
	
	The hybrid Newton method \cite{boyd2004convex} needs the strong smoothness and convexity of $\mathcal{D}_K(\mathbf{C})$ and aims to establish the relationship between the $\tau$-stationary point and the local/global minimizer of \eqref{eq:C_k_l0}. Next, we introduce   the following theorem, which clarifies the relationship between the $\tau$-stationary point and the local/global minimizer of \eqref{eq:C_k_l0}. Theorem \ref{theorem1} and the proof are detailed in \cite[Theorem 1]{zhou2021newton}.

	\begin{theorem} 
		\label{theorem1}
		For problem \eqref{eq:C_k_l0}, the following results hold.
		
		1) (\textbf{Necessity}) A global minimizer $\mathbf{C}^{\ast}$ is also a $\tau$-stationary point for any $0 <\tau<
		1/L$ if $\mathcal{D}_K(\mathbf{C})$ is strongly smooth with $L > 0$. Moreover,
		\begin{equation}\label{eq:necessity_global}
			\mathbf{C}^{\ast}=\text{Prox}_{\tau \lambda \lVert \cdot \rVert _0}\left( \mathbf{C}^{\ast}-\tau \bigtriangledown \mathcal{D}_K\left( \mathbf{C}^{\ast} \right) \right).
		\end{equation}
		
		2) (\textbf{Sufficiency}) A $\tau$-stationary point with $\tau>0$ is a local minimizer if $\mathcal{D}_K(\mathbf{C})$ is convex.
		Furthermore, a $\tau$-stationary point with $\tau  \ge 1/\ell$ is also a (unique) global minimizer if $\mathcal{D}_K(\mathbf{C})$ is strongly convex with $\ell>0$.
	\end{theorem}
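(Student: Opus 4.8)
The plan is to prove the three assertions separately, using the descent lemma for the necessity direction and the strong-convexity inequality \eqref{eq:convex} for the sufficiency directions, in each case exploiting the explicit minimization that defines the proximal map in \eqref{eq:tau_stati_point}.

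\emph{Necessity.} Let $\mathbf{C}^{\ast}$ be a global minimizer. By \eqref{eq:tau_stati_point} it suffices to show that $\mathbf{C}^{\ast}$ minimizes $\frac{1}{2}\lVert \mathbf{D}-(\mathbf{C}^{\ast}-\tau\bigtriangledown\mathcal{D}_K(\mathbf{C}^{\ast}))\rVert^2+\tau\lambda\lVert\mathbf{D}\rVert_0$ over $\mathbf{D}$; expanding the square and discarding the $\mathbf{D}$-independent constant, this is equivalent to minimizing $P(\mathbf{D})\triangleq \frac{1}{2\tau}\lVert\mathbf{D}-\mathbf{C}^{\ast}\rVert^2+\langle\bigtriangledown\mathcal{D}_K(\mathbf{C}^{\ast}),\mathbf{D}-\mathbf{C}^{\ast}\rangle+\lambda\lVert\mathbf{D}\rVert_0$. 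The key step combines global optimality, $\lambda\lVert\mathbf{D}\rVert_0-\lambda\lVert\mathbf{C}^{\ast}\rVert_0\ge \mathcal{D}_K(\mathbf{C}^{\ast})-\mathcal{D}_K(\mathbf{D})$, with the descent lemma applied at $\mathbf{C}^{\ast}$, which lower-bounds $\mathcal{D}_K(\mathbf{C}^{\ast})-\mathcal{D}_K(\mathbf{D})$ by $-\langle\bigtriangledown\mathcal{D}_K(\mathbf{C}^{\ast}),\mathbf{D}-\mathbf{C}^{\ast}\rangle-(L/2)\lVert\mathbf{D}-\mathbf{C}^{\ast}\rVert^2$. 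Substituting yields $P(\mathbf{D})-P(\mathbf{C}^{\ast})\ge \frac{1}{2}(1/\tau-L)\lVert\mathbf{D}-\mathbf{C}^{\ast}\rVert^2$, which is nonnegative exactly when $\tau<1/L$, establishing \eqref{eq:necessity_global}.

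\emph{Sufficiency via support analysis.} Both remaining claims start from the separable closed form \eqref{eq:prox}. Writing $T\triangleq\text{supp}(\mathbf{C})$ for a $\tau$-stationary point and setting $\mathbf{u}=\mathbf{C}-\tau\bigtriangledown\mathcal{D}_K(\mathbf{C})$, the thresholding rule forces every retained block to satisfy $\mathbf{c}_i=\mathbf{u}_i$, hence $[\bigtriangledown\mathcal{D}_K(\mathbf{C})]_i=\mathbf{0}$ for $i\in T$, while every nullified block obeys $\tau|[\bigtriangledown\mathcal{D}_K(\mathbf{C})]_i|\le\sqrt{2\tau\lambda}$ for $i\in T^{c}$. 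For the local-minimizer claim with convex $\mathcal{D}_K$, I would restrict to a ball of radius strictly below $\min_{i\in T}|\mathbf{c}_i|$, so that every competitor $\mathbf{D}$ retains the support $T$; competitors with $\text{supp}(\mathbf{D})=T$ reduce to minimizing the convex $\mathcal{D}_K$ over the coordinate subspace indexed by $T$, on which the vanishing gradient certifies $\mathbf{C}$ as optimal, whereas competitors enlarging the support raise the $\ell_0$ count by at least one and thus the penalty by at least $\lambda$.

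\emph{Global optimality under strong convexity.} For the last claim I would apply \eqref{eq:convex} at $\mathbf{C}$ to get $\mathcal{D}_K(\mathbf{D})\ge\mathcal{D}_K(\mathbf{C})+\langle\bigtriangledown\mathcal{D}_K(\mathbf{C}),\mathbf{D}-\mathbf{C}\rangle+(\ell/2)\lVert\mathbf{D}-\mathbf{C}\rVert^2$, then bound the inner product by comparing the proximal objective at $\mathbf{D}$ against its value at $\mathbf{C}$, giving $\langle\bigtriangledown\mathcal{D}_K(\mathbf{C}),\mathbf{D}-\mathbf{C}\rangle\ge\lambda\lVert\mathbf{C}\rVert_0-\lambda\lVert\mathbf{D}\rVert_0-\frac{1}{2\tau}\lVert\mathbf{D}-\mathbf{C}\rVert^2$. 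Telescoping the two gives $\mathcal{D}_K(\mathbf{D})+\lambda\lVert\mathbf{D}\rVert_0\ge\mathcal{D}_K(\mathbf{C})+\lambda\lVert\mathbf{C}\rVert_0+\frac{1}{2}(\ell-1/\tau)\lVert\mathbf{D}-\mathbf{C}\rVert^2$, which is nonnegative once $\tau\ge1/\ell$ and strictly positive for $\mathbf{D}\ne\mathbf{C}$, delivering both global optimality and uniqueness. I expect the main obstacle to be the local-minimizer argument: because $\lVert\cdot\rVert_0$ is discontinuous, the neighborhood radius must be chosen small enough that no support-enlarging competitor can decrease $\mathcal{D}_K$ by as much as the $\lambda$ it adds, and competitors that merely perturb within $T$ must be handled by the separate convexity argument above.
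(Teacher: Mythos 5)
The paper does not prove Theorem~\ref{theorem1} at all: it explicitly defers to \cite[Theorem 1]{zhou2021newton}, so there is no in-paper argument to compare against. Your proof is essentially the standard argument from that cited reference, and it is sound: the necessity part correctly reduces membership in \eqref{eq:tau_stati_point} to minimizing the surrogate $P(\mathbf{D})$ and combines global optimality with the descent lemma to get the margin $\tfrac{1}{2}(1/\tau-L)\lVert\mathbf{D}-\mathbf{C}^{\ast}\rVert^2$, which for $\tau<1/L$ also yields the single-valuedness claimed in \eqref{eq:necessity_global}; the global-optimality part correctly telescopes the strong-convexity inequality \eqref{eq:convex} against the prox-optimality inequality. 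What you supply beyond the paper is an actual self-contained derivation, including the support-based reading of \eqref{eq:prox} (vanishing partial gradient on $\mathrm{supp}(\mathbf{C})$, thresholded gradient off it), which the paper only uses implicitly later when defining the set $T$ and the stationary equation.

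Two loose ends are worth tightening. First, in the local-minimizer case you correctly identify that support-enlarging competitors must be controlled by continuity of $\mathcal{D}_K$; to close it, fix the radius as the minimum of $\min_{i\in T}\lvert\mathbf{c}_i\rvert$ and a $\rho>0$ for which $\lvert\mathcal{D}_K(\mathbf{D})-\mathcal{D}_K(\mathbf{C})\rvert<\lambda$ on the ball (continuity is automatic since $\mathcal{D}_K$ is convex and finite on $\mathbb{R}^n$), so the extra penalty of at least $\lambda$ dominates any decrease in the data term. Second, your final bound $\tfrac{1}{2}(\ell-1/\tau)\lVert\mathbf{D}-\mathbf{C}\rVert^2$ is only nonnegative, not strictly positive, at the boundary value $\tau=1/\ell$, so uniqueness of the global minimizer in that edge case needs a separate remark (e.g., strictness of \eqref{eq:convex} for strongly convex $\mathcal{D}_K$ away from the tangent plane, or restricting to $\tau>1/\ell$); this matches the slightly informal ``(unique)'' in the statement and is a presentational rather than a substantive gap.
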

	
	To express the solution of (\ref{eq:eq19_stati_point}) more explicitly, we define
	\begin{equation} \nonumber 
		T\triangleq T_{\tau}\left( \mathbf{C},\lambda \right) :=\left\{ i\in \mathbb{N}:\left| \mathbf{c}_i-\tau \bigtriangledown _i \mathcal{D}_K\left( \mathbf{C}\right) \right|\ge \sqrt{2\tau \lambda} \right\},
	\end{equation}
	where $\mathbb{N}=\left\{0,1,\cdots,\gamma \right\}$, $\bigtriangledown _i \mathcal{D}_K\left( \mathbf{C}\right)=\left(\bigtriangledown \mathcal{D}_K\left( \mathbf{C}\right)\right)_i$.
	
	Based on this set, we introduce the following stationary equation
	\begin{equation}\label{eq:stationary}
		G_{\tau}\left( \mathbf{C};T \right) \triangleq \left[ \begin{array}{c}
			\bigtriangledown_T \mathcal{D}_K\left( \mathbf{C}\right)\\
			\mathbf{C}_{\overline{T}}\\
		\end{array} \right] =\mathbf{0},
	\end{equation}
	where $\overline{T}$ is the complementary set of $T \subseteq \mathbb{N}$.

	
	The relationship between \eqref{eq:eq19_stati_point} and \eqref{eq:stationary} is revealed in \cite{zhou2021newton}.
	To solve \eqref{eq:stationary}, we first locate the hidden index set $T$ by employing an adaptive updating rule as follows. First, calculate an approximation $T_{\kappa}$ for a computed point $\mathbf{C}^{(\kappa)}$ (the $\kappa$-th iteration). Then, by fixing this set $T_{\kappa}$, apply the Newton method to $G_{\tau} (\mathbf{C}; T_{\kappa})$ once obtaining a direction $s^{(\kappa)}$. In other words, $s^{(\kappa)}$ is a solution to the following equation. 
	\begin{equation}
		\bigtriangledown G_{\tau}\left(\mathbf{C}^{(\kappa)};T_{\kappa} \right) s+G_{\tau}\left( \mathbf{C}^{(\kappa)};T_{\kappa} \right)=\mathbf{0} .
	\end{equation}
	The explicit formula of $G_{\tau}\left(\mathbf{C}^{(\kappa)};T_{\kappa}  \right)$ from (\ref{eq:stationary}) implies that $s^{(\kappa)}$ satisfies
	\begin{equation}\label{eq:s^k_solution}
		\mathcal{H}_{(\kappa)}s_{T_{\kappa}}^{(\kappa)}=\mathcal{G}_{(\kappa)} \mathbf{C}_{\overline{T}_{\kappa}}^{(\kappa)}-p^{(\kappa)}_{T_{\kappa}},
	\end{equation}
	where 
	\begin{equation} \nonumber
		\begin{aligned}
			\mathcal{H}_{(\kappa)} & = \bigtriangledown _{T_{\kappa}}^{2} \mathcal{D}_K\left( \mathbf{C}^{(\kappa)} \right) ,  \\
			\mathcal{G}_{(\kappa)}&=\bigtriangledown_{T_k,\overline{T}_{\kappa}}^{2}\mathcal{D}_K\left( \mathbf{C}^{(\kappa)} \right), \\
			p^{(\kappa)}&= \bigtriangledown_{T_k} \mathcal{D}_K\left( \mathbf{C}^{(\kappa)}\right).
		\end{aligned}
	\end{equation}
	And its solution satisfies
	\begin{equation}\label{eq:p^ks^k_solution}
		\left< p^{(\kappa)}_{T_{\kappa}},s_{T_{\kappa}}^{(\kappa)} \right> \le -\delta \lVert s^{(\kappa)}  \rVert ^2+ \lVert \mathbf{C}_{\overline{T}_{\kappa}}^{(\kappa)} \rVert ^2 / \left(4 \tau \right),
	\end{equation}
	where $\delta \in \left( 0,\min \left( 1, \ell\right) \right)$.
	$0<\tau  \le \frac{2\overline{\alpha }\delta \beta}{(\gamma+1)L^2}$, $\beta \in \left( 0,1 \right) $, 
	\begin{equation}
		\overline{\alpha }:=\min \left\{ \frac{1-2\sigma}{L/\delta-\sigma},\frac{2\left( 1-\sigma \right) \delta}{L},1 \right\},
	\end{equation}
	where $\sigma \in \left( 0,1/2 \right) $.
	
	If \eqref{eq:s^k_solution} is solvable and \eqref{eq:p^ks^k_solution} is satisfied, then update
	$s^{(\kappa)}$ using the Newton direction, as indicated by \eqref{eq:s^k_solution} and $s_{\overline{T}_{\kappa}}^{(\kappa)}=-\mathbf{C}_{\overline{T}_{\kappa}}^{(\kappa)}$.
	Otherwise, we use the gradient descent method to update the direction $s^{(\kappa)}$, that is, $s_{T_{\kappa}}^{(\kappa)} = -p^{(\kappa)}_{T_{\kappa}} ,
	s_{\overline{T}_{\kappa}}^{(\kappa)} = -\mathbf{C}_{\overline{T}_{\kappa}}^{(\kappa)}$. 

	
	Subsequently, we employ the line search methodology to ascertain the step size $\rho^{(\kappa)}$ for which the inexact Armijo-Goldstein criterion  \cite{Liu2020Optimization} can be adopted, $\mathbf{C}^{(\kappa+1)}=\mathbf{C}^{(\kappa)}+\rho^{(\kappa)} s^{(\kappa)}$, where
	\begin{align}\nonumber 
		\mathbf{C}^{(\kappa+1)} & := \left[ \begin{array}{c}
			\mathbf{C}_{T_{\kappa}}^{(\kappa)}+\rho^{(\kappa)} s_{T_{\kappa}}^{(\kappa)}\\
			\mathbf{C}_{\overline{T}_{\kappa}}^{(\kappa)}
			+s_{\overline{T}_{\kappa}}^{(\kappa)}\\
		\end{array} \right] \nonumber \\
		&=\left[ \begin{array}{c}
			\mathbf{C}_{T_{\kappa}}^{(\kappa)}+\rho^{(\kappa)} s_{T_{\kappa}}^{(\kappa)}\\
			0\\
		\end{array} \right] . \nonumber 
	\end{align}
	
	The parameter settings can adhere to the following conditions: 
	$0<\lambda \le \underline{\lambda}$, where
	\begin{equation}
		\underline{\lambda }\triangleq \underset{i}{\min}\left\{ \frac{\tau}{2}\left| \bigtriangledown _i \mathcal{D}_K\left( \mathbf{0} \right) \right|^2:\bigtriangledown _i \mathcal{D}_K\left( \mathbf{0}\right) \ne 0 \right\}.
	\end{equation}
	
	\textit{Halting conditions:}
	It is reasonable to terminate the algorithm at the $\kappa$-th step if $\kappa$ reaches the maximum number of iterations or $\mathbf{C}^{(\kappa)}$ satisfies $\text{supp}\left( \mathbf{C}^{(\kappa)} \right) \subseteq T_{\kappa}=T_{\kappa-1}$ and $\lVert G_{\tau_{\kappa}}\left( \mathbf{C}^{(\kappa)};T_{\kappa} \right) \rVert \le 10^{-6}$. $\text{supp}\left(\mathbf{C}^{(\kappa)} \right) $ be its support set consisting of the indices of the nonzero elements of $\mathbf{C}^{(\kappa)}$.
	
	\begin{remark}
		In comparing the ORLS solver limiting the polynomial order with the hybrid Newton algorithm for $\ell_0$-regularized polynomial optimization, it can be seen that the latter 
		does not rely on linear or linearized measurement as the former does but still needs $\mathcal{D}_K(\mathbf{C})$ strongly convex and needs to set four parameters $\sigma, \beta, \delta, \tau$ properly. For nonconvex data fitting models, measurement linearization/conversion as mentioned in Section \ref{sec:nonlinearMeasurement} that will result in strongly convex $\mathcal{D}_K(\mathbf{C})$ is practically considerable.   
	\end{remark}

	\begin{figure}[htbp]
		\centerline{\includegraphics[width=0.9\columnwidth]{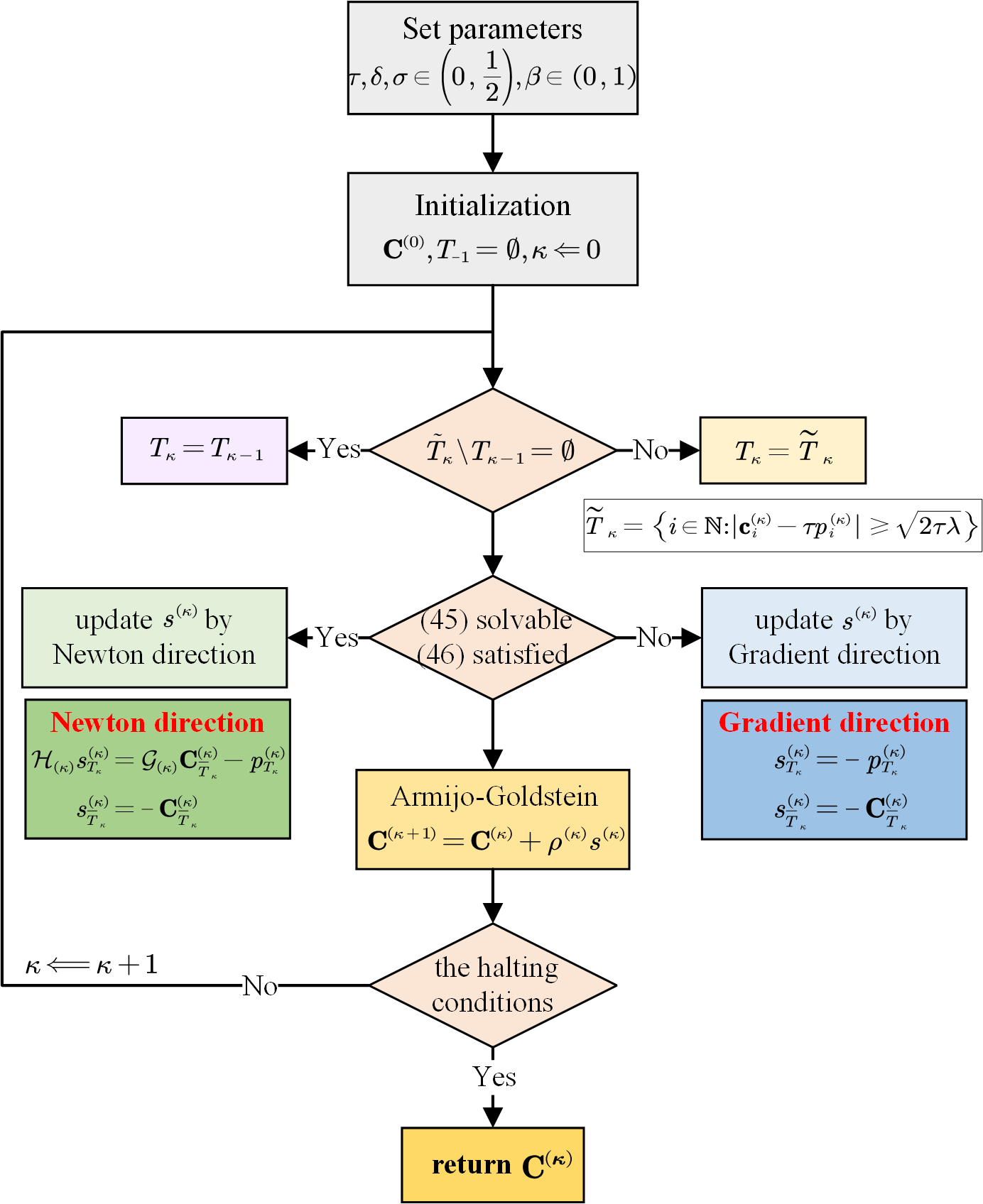}}
		\caption{Flowchart of the hybrid Newton algorithm for T-FoT $\ell_0$ optimization.}
	\end{figure}

		\section{Extension to Multiple Target Case} \label{sec:extension-MTT}
		Given that the measurement-to-track (M2T) association can be properly resolved, the multiple target version of Problems 2 and 3 can be the same addressed for each target in parallel. So, both proposed T-FoT optimization solvers can be directly extended to multiple T-FoT optimization in parallel. All the theoretical results obtained so far hold for each target.  
		
		Let us index measurement received at time $t$ by $j \in \mathcal{J}_t, t \in [k',k]$ which may be either the real measurement of the target or the clutter, and the target T-FoT set $\mathcal{I}$ specified by a group of parameters $\big\{ \mathbf{C}_{(i)}\big\}_{i\in \mathcal{I}}$, where $\mathbf{C}_{(i)}$ represents the parameters of the $i$-th estimated trajectory which can be of different orders with each other. 
		Here, we propose a joint M2T association and multi-trajectory fitting framework as follows.
		\begin{align}
			& \big\{ \mathbf{C}_{(i)} \big\} _{i\in \mathcal{I}} = \underset{\left\{ \mathbf{C}_{(i)} \right\} _{i\in \mathcal{I}}}{\text{argmin}} \sum_{i\in \mathcal{I}} \Big(   \lambda_i \lVert \mathbf{C}_{(i)}\rVert_0 +  \nonumber \\
			&~~~~~~ \sum_{t=k'}^k \sum_{j\in \mathcal{J}_t} \mathbf{1}_i \big( \mathbf{y}_{t}^{(j)}  \big)  \big\lVert \mathbf{y}_{t}^{(j)}-h_t\big( F( t;\mathbf{C}_{(i)} ) ,\bar{\mathbf{v}}_{t}^{(j)} \big) \big\rVert^2_{\text{var}(\mathbf{y}_t^{(j)})}  \Big)  ,\label{eq:MTT_C_k} \\
			& s.t. \ \ \mathbf{1}_i \big( \mathbf{y}_{t}^{(j)} \big) \in \left\{ 0,1 \right\}, \forall i\in \mathcal{I}, j\in \mathcal{J}_t, t\in [k',k] \nonumber \\
			& ~~~~~~ \sum_{i\in \mathcal{I}}{\mathbf{1}_i \big( \mathbf{y}_{t}^{(j)} \big)}\le 1, \forall j\in \mathcal{J}_t,  t\in [k',k] \label{eq:i_I-const} \\
			& ~~~~~~ \sum_{j\in \mathcal{J}_t} {\mathbf{1}_i \big( \mathbf{y}_{t}^{(j)} \big)}\le 1, \forall i\in \mathcal{I} , t\in [k',k]\label{eq:j_J-const} 
		\end{align}
		where $\lambda_i$ is the constraint coefficient of the $i$-th polynomial T-FoT, $\mathbf{1}_i \big( \mathbf{y}_{t}^{(j)} \big)$ represents the association of the $j$-th measurement received at time $t$ with the $i$-th trajectory, $\mathbf{1}_i \big( \mathbf{y}_{t}^{(j)} \big) =1$ for associated and $\mathbf{1}_i \big( \mathbf{y}_{t}^{(j)} \big) =0$ otherwise, \eqref{eq:i_I-const} is due to the constraint that each measurement can at most be associated to one trajectory while \eqref{eq:j_J-const} limits that each trajectory can at most be associated with one measurement at any time instant. 
		
		

		\begin{remark}
			The multi-target T-FoT model \eqref{eq:MTT_C_k} aims to solve the problem of data association and trajectory fitting jointly, which is nontrivial especially when the number of targets is unknown and false/missing data are presented. The optimization becomes more challenging when the order of each T-FoT is not given in advance but has to be taken into account in the cost function. 
			We leave these implementation problems aside but focus on the fundamental problem of comparing and evaluating a given set of T-FoT estimates based on the available ground truth. 
		\end{remark}

		\begin{remark}
			The above formulation for joint M2T association and multi-target trajectory fitting is intractable. Our approach solves it by separating the M2T association and fitting problems, and thus cannot guarantee optimality. 
			In fact, the M2T association itself, especially in the presence of clutter, is intractable, which has been a long-standing challenge in the tracking community \cite{Stefano2022DA}. The question of how to efficiently solve this joint optimization problem remains open.  
		\end{remark}

		\section{Simulation}\label{sec:simulation}
		We will consider both single and multiple target tracking scenarios in our simulation. 
		In the former, the real target trajectory is generated using the traditional SSM (and so the real trajectory is a time-series of discrete points, not a real continuous-time T-FoT) while in the latter, the real trajectories are just generated by polynomial T-FoTs. In both simulation scenarios, the proposed T-FoT approaches including the $\gamma$-limiting ORLS and $\ell_0$-hybrid Newton algorithms are carried out with a sliding time window of maximum $T_\text{w}=10$ sampling steps (corresponding to 1 second in total). The parameters needed in the hybrid Newton approach are set as given in Table \ref{tab:table1}. For comparison, the polynomial T-FoTs of fixed order $\gamma =1$ and $\gamma =2$ are also considered. 
		Furthermore, 
		comparison methods also include the $\ell_1$-ADMM algorithm which substitutes the $\ell_0$-norm by $\ell_1$-norm, namely $\varOmega _F\left( \mathbf{C}_{\gamma} \right) \triangleq  \lVert \mathbf{C}_{\gamma} \rVert _1$ in \eqref{eq:Const-T-FoT-Op}, and then adopts the ADMM solver \cite{lou2018fast} to estimate the T-FoT. In addition, the KF is also simulated in the single-target tracking case. 

		\subsection{Single Maneuvering Target Tracking}
		This simulation scenario is analogous to the one presented in Section 4.1.4 of \cite{hartikainen2008optimal}, where the motion of a maneuvering object switches between Wiener process velocity (WPV) with a low process noise of power spectral density $0.1$, and Wiener process acceleration (WPA) 
		with a high process noise of power spectral density $1$. The system is simulated with $100$ sampling steps, each of $1$s. 
		The real target motion model was manually set to WPV during steps $1$s-$30$s, $46$s-$70$s and $86$s-$100$s and to WPA during steps $31$s-$45$s, and $71$s-$85$s. This leads to four times of maneuvering. 
		
		We consider the linear measurement with additive, zero-mean noise, namely 
		\begin{equation}\label{eq:LinearMeansure-sim}
			\mathbf{y}_k= \mathbf{x}_k + \mathbf{v}_k,
		\end{equation}
		where the measurement noise is white Gaussian satisfying 
		\begin{equation} 
			\mathrm{E}\left[ {\mathbf{v}_k} \right] = 0, ~ \mathrm{E}\left[ {\mathbf{v}_k\mathbf{v}_j^\text{T}} \right] = \left[ {\begin{array}{*{20}{c}}
					{100}&0\\
					0&{100}
			\end{array}} \right]{\delta _{kj}},
		\end{equation}
		where \(\delta _{kj}\) 
		is equal to one if \(k=j\) and to zero otherwise.

		
		\begin{table}[!t]
			\caption{Setting of Parameters in hybrid Newton method\label{tab:table1}}
			\centering
			\begin{tabular}{ c | c}
				\hline\noalign{\smallskip}
				Parameters& Value Used \\ \noalign{\smallskip}\hline \noalign{\smallskip}
				$\sigma$  &  $5\times 10^{-5}$    \\ \noalign{\smallskip}\hline\noalign{\smallskip}
				$\beta$ &  0.5  \\ \noalign{\smallskip}\hline\noalign{\smallskip}
				$\delta$ &  $10^{-10}$     \\ \noalign{\smallskip}\hline\noalign{\smallskip}
				$\tau$&  1    \\ \noalign{\smallskip}\hline
			\end{tabular}
		\end{table}
		
		The simulation is carried out for $50$ Monte Carlo runs, where each run involves generating a trajectory randomly originating from the same initial point but different process noises and observation series, according to the above statistical models. The real trajectory and T-FoT estimates in one run are given in Fig. \ref{fig1}. 
		To gain further insights, the average root mean square error (RMSE) of the position estimation over time is given in Fig. \ref{fig2} and the time-averaged RMSE (over 100 sampling steps), as well as the average computing time for each step, are given in Table \ref{tab:table2}. The simulation is conducted on MATLAB R2018b. 
		
		\begin{itemize}
			\item  On the estimation accuracy, 
			the $\gamma$-limiting ORLS algorithm  performs the best, the proposed $\ell_0$-hybrid Newton approach the second, both outperforming the fitting using fixed-order (whether $\gamma=1$ or $\gamma=2$) and the $\ell_1$-ADMM approach. 
			The results demonstrated the importance of model adaption (the order of the T-FoT) in case of target maneuvering. It also confirms that $\ell_1$ regularization which may result in a high order yet minimized sum of the polynomial coefficients is undesired in our case. 
			
			\item  On the computing speed, all the three regularized T-FoT approaches (using adaptive orders) are unsurprisingly slower than those using fixed order. In particular, the hybrid Newton approach for $\ell_0$-regularized optimization suffers from the slowest calculation speed. It is valuable yet challenging to speed up this algorithm for practical use. The ORLS solver with limiting polynomial order $\gamma$ offers much superior computational speed compared to the direct hybrid Newton algorithm. While the $\ell_1$-ADMM approach runs fast too, close to the $\gamma$-limiting ORLS approach, its accuracy is further reduced. 
		\end{itemize}

		\begin{figure}[htbp]
			\vspace{-2mm}
			\centerline{\includegraphics[width=0.8\columnwidth]{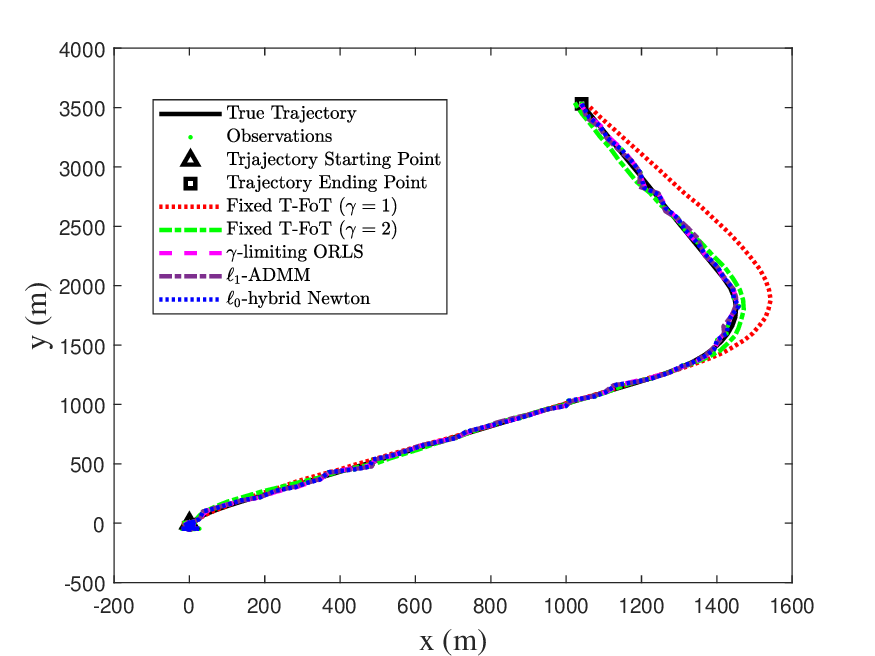}}
			\caption{Real trajectory and estimates given by different  estimators in one trial.}
			\label{fig1}
			\vspace{-2mm}
		\end{figure}

		\begin{figure}[htbp]
			\vspace{-2mm}
			\centerline{\includegraphics[width=0.8\columnwidth]{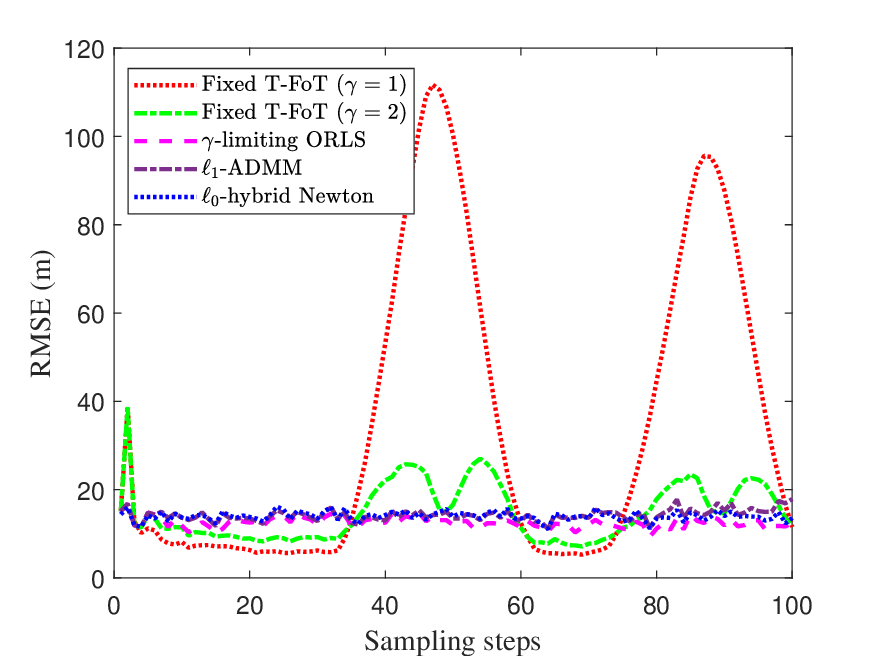}}
			\caption{RMSE of different  estimators over sampling steps.}
			\label{fig2}
			\vspace{-2mm}
		\end{figure}
		
		
		\begin{table}[!t]
			\caption{Average performance of different estimators over 100 sampling steps in the single target case \label{tab:table2}}
			\centering
			\begin{tabular}{ c | c c}
				\hline\noalign{\smallskip}
				Estimators    & Aver. RMSE (m) & Aver. Time (s) \\ \noalign{\smallskip}\hline \noalign{\smallskip}
				Fixed order $\gamma =1$   &  34.1762  & 0.0409     \\ \noalign{\smallskip}\hline\noalign{\smallskip}
				Fixed order $\gamma =2$  & 14.7818   & 0.0389 \\ \noalign{\smallskip}\hline\noalign{\smallskip}
				$\ell_1$-ADMM   &  14.2232 & 0.0644    \\ \noalign{\smallskip}\hline\noalign{\smallskip}
				$\gamma$-limiting ORLS  & \textbf{12.5376}   & 0.0821 \\ \noalign{\smallskip}\hline\noalign{\smallskip}
				$\ell_0$-hybrid Newton  & 13.9804 & 22.5460     \\ \noalign{\smallskip}\hline 
			\end{tabular}
		\end{table}

		\subsection{Multiple Target Tracking}
		We now consider a more challenging tracking scenario with two targets, each with a lifespan from time $1$s to time $100$s. One target exhibits maneuvers at times $20$s and $70$s, whereas the other target maneuvers around time $40$s. The trajectories that are generated using polynomial curves (for which a maneuver occurs when the polynomial order changes) in one run are given in Fig. \ref{fig:multi_scen}. Similar with \eqref{eq:LinearMeansure-sim}, measurements are made on the 2-dimensional position with zero-mean Gaussian noise of standard deviation $1$m. The two targets are so well separated that the position measurements and the tracks can be correctly associated by using the standard global nearest-neighbor approach. 
		The target detection probability $P_D=1$. Clutter follows a Poisson model with an average $15$ clutter-points per measuring time  
		in the region $\left[-170, 150\right]$m $\times \left[-150, 300\right]$m. 
		Thanks to the unit detection probability and the perfect M2T association, the T-FoT estimators are free of the MD problem. 
		Due to the absence of the target dynamics and the background profile, no Bayesian filters can be properly set up, and so our comparison includes only T-FoT solvers using different polynomials. 
		
		For the performance evaluation, both the OSPA (optimal sub-pattern assignment) \cite{schuhmacher2008consistent} and the Star-ID (spatio-temporal-aligned trajectory integral distance) \cite{Li25TFoT-part1} metrics are used. The former compares the real states with the point state estimates extracted from the estimated T-FoT at each measuring time while the latter calculates the integral distance between the estimated and real T-FoTs in the fitting time-window part which is basically a cumulative distance of trajectories over the concerning time-window. We set the cutoff parameter $c= 20$m for the OSPA and relatively segment/trajectory cutoff parameter $c_\text{S}=c_\text{T}= 20$m for the Star-ID, and the same metric order $p=2$ for both of them.  
		The average OSPA error and Star-ID of different T-FoT estimators over time are shown in Fig. \ref{fig:ospa} and Fig. \ref{fig:Star-ID}, respectively. 
		Furthermore, the time-averaged Star-ID (TA-Star-ID \cite{Li25TFoT-part1}) of different T-FoT estimators which divides the Star-ID by the length of the concerning time-window is shown in Fig. \ref{fig:TA-Star-ID}. 
		The averages of the TA-Star-ID, of the OSPA distances and of the computing time over 100 sampling steps are given in Table \ref{tab:table3}.

		\begin{figure}[htbp]
			\centerline{\includegraphics[width=0.85\columnwidth]{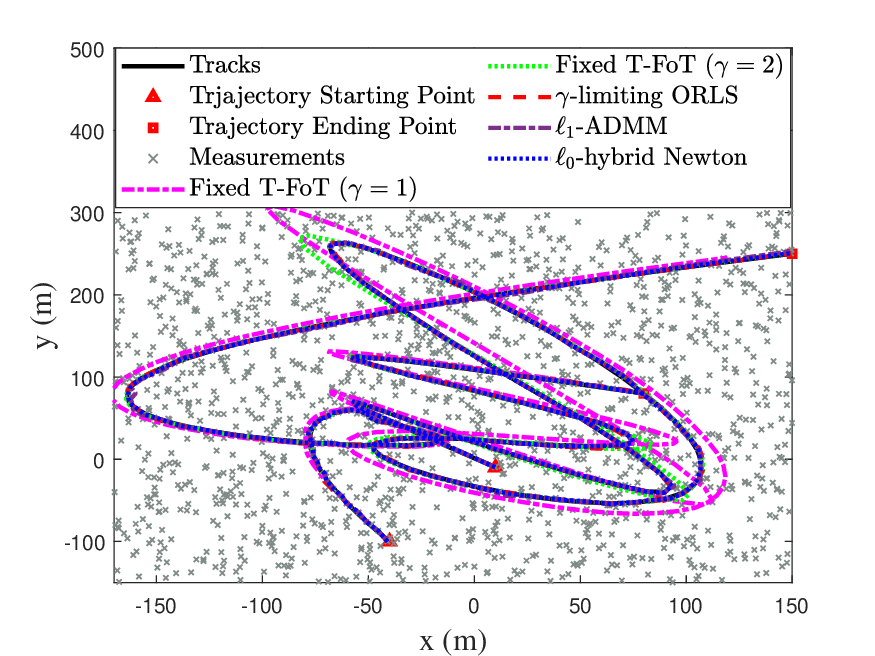}}
			\caption{Real and estimated T-FoTs of two targets in the cluttered environment.}
			\label{fig:multi_scen}
		\end{figure}
		
		\begin{figure}[htbp]
			\centerline{\includegraphics[width=0.8\columnwidth]{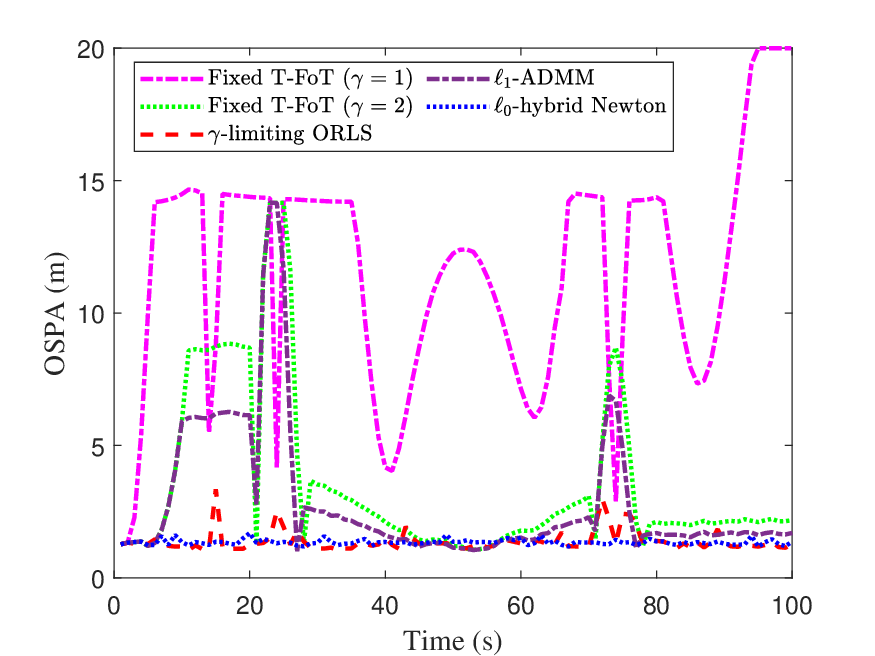}}
			\caption{Average OSPA error of different T-FoT estimators over time.}
			\label{fig:ospa}
		\end{figure}
		
		\begin{figure}[htbp]
			\centerline{\includegraphics[width=0.8\columnwidth]{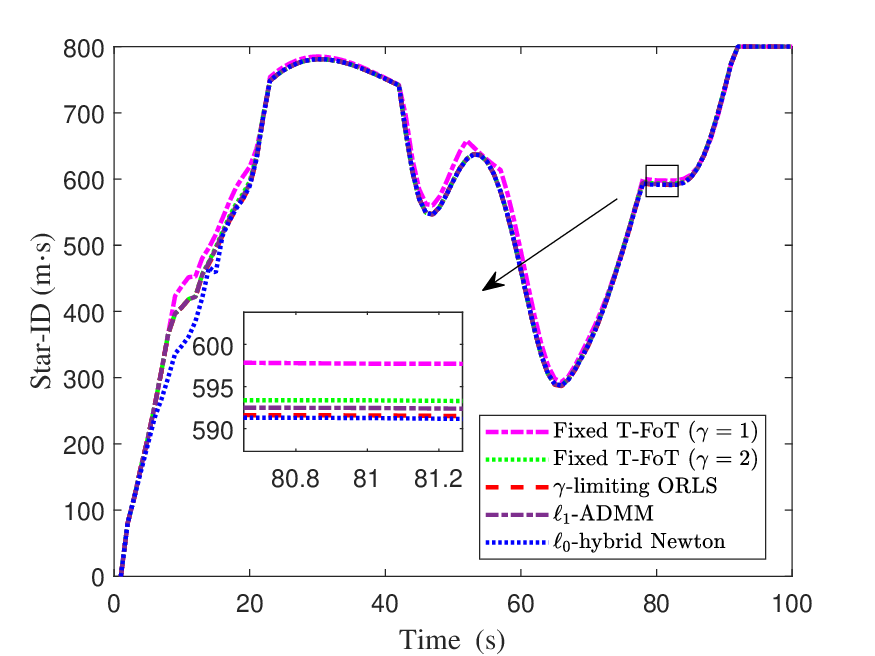}}
			\caption{Average Star-ID error of different T-FoT estimators over time.}
			\label{fig:Star-ID}
		\end{figure}
		
		\begin{figure}[htbp]
			\centerline{\includegraphics[width=0.8\columnwidth]{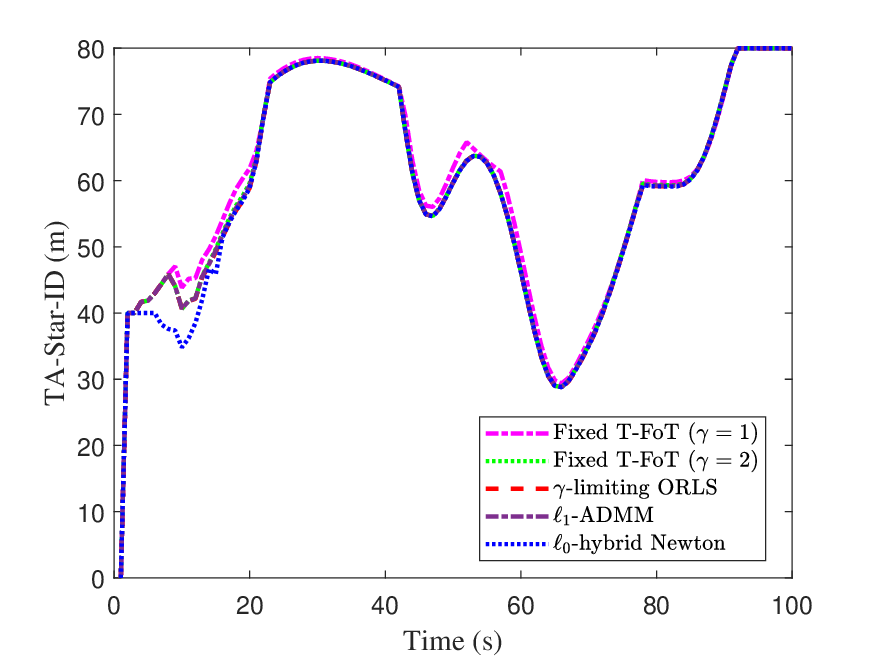}}
			\caption{Average TA-Star-ID error of different T-FoT estimators over time.}
			\label{fig:TA-Star-ID}
		\end{figure}
		
		
		
		\begin{table*}[!t]
			\caption{Average performance of different estimators over 100 simulation steps in the multi-target case \label{tab:table3}}
			\centering
			\begin{tabular}{ c | cccc}
				\hline\noalign{\smallskip}
				T-FoT solver & OSPA (m) & TA-Star-ID (m) & Star-ID (m$\cdot$s) & Time (s) \\ \noalign{\smallskip}\hline \noalign{\smallskip}
				Fixed order $\gamma =1$ &11.6494 & 59.5925 & 580.8893 &0.0817     \\ \noalign{\smallskip}\hline\noalign{\smallskip}
				Fixed order $\gamma =2$ &3.5599 & 58.7122 & 572.1178 &\textbf{ 0.0812} \\ \noalign{\smallskip}\hline\noalign{\smallskip}
				$\ell_1$-ADMM & 2.8418 & 58.6648 &  571.6438 & 0.1254    \\ \noalign{\smallskip}\hline\noalign{\smallskip}
				$\gamma$-limiting ORLS & 1.3761 & 58.6200 & 571.1960 &  0.1499 \\ \noalign{\smallskip}\hline\noalign{\smallskip}
				$\ell_0$-hybrid Newton & \textbf{1.3447} & \textbf{58.1211} & \textbf{566.9380} & 27.6312     \\ \noalign{\smallskip}\hline
			\end{tabular}
		\end{table*}

		Regarding the estimation accuracy, it is evident that in terms of both OSPA distance and Star-ID, the polynomial fitting using the $\ell_0$-hybrid Newton approach performs the best again in all, the $\gamma$-limiting ORLS the second and the $\ell_1$-ADMM the third, all outperforming significantly the fitting with a fixed order. 
		However, the hybrid Newton method exhibits a considerably much higher computing cost compared to the others, making it impractical for real-time scenarios. In contrast, the slight increase of the computational cost in the proposed $\gamma$-limiting ORLS approach together with its outstanding accuracy make it well-suited for real-time applications. 
		
		In comparing Fig. \ref{fig:TA-Star-ID} with Fig. \ref{fig:ospa}, the simulation also confirms that the TA-Star-ID performs consistently with the OSPA metric while the former, which accounts for the whole trajectory-segment in the time-window, varies much smoother than the latter in the time series. This precisely aligns with the remarkable distinction between the trajectory estimator and the point state estimator.

		\section{Conclusion}\label{sec:conclusion}
		Our series of work aims to establish a data-driven SP-based learning-for-tracking framework which yields the continuous-time trajectory rather than discrete-time point estimates. This paper, as part II of the series, addresses the learning of the trajectory SP trend by polynomial, which plays a key role in the whole SP learning \cite{Li25TFoT-part3}. In particular, we address the polynomial T-FoT optimization with two distinctive strategies of regularization, balancing the fitting accuracy and the polynomial smoothness/simplicity. One greedily searches the optimal order $\gamma$ of the polynomial in a narrow, bounded range. In particular, the $\gamma$-limiting ORLS solver that fits the linear measurement model is detailed. The other employs a hybrid Newton approach to address the NP-hard $\ell_0$ regularization with convex data fitting model. 
		The simulation results demonstrate that our proposed T-FoT optimization method significantly outperforms fixed-order polynomial fitting approaches, as well as the $\ell_1$ regularization solved by the ADMM method, in terms of tracking accuracy in the case of both single and multiple maneuvering target tracking. While the computing speed of $\gamma$-limiting ORLS solver is acceptable, the hybrid Newton approach for $\ell_0$ regularization is computationally intensive and not suitable for online tracking. More effort is needed to solve the regularized optimization in real time and integrate its learning with that of the RSP. 



\begin{thebibliography}{10}
	\providecommand{\url}[1]{#1}
	\csname url@samestyle\endcsname
	\providecommand{\newblock}{\relax}
	\providecommand{\bibinfo}[2]{#2}
	\providecommand{\BIBentrySTDinterwordspacing}{\spaceskip=0pt\relax}
	\providecommand{\BIBentryALTinterwordstretchfactor}{4}
	\providecommand{\BIBentryALTinterwordspacing}{\spaceskip=\fontdimen2\font plus
		\BIBentryALTinterwordstretchfactor\fontdimen3\font minus \fontdimen4\font\relax}
	\providecommand{\BIBforeignlanguage}[2]{{%
			\expandafter\ifx\csname l@#1\endcsname\relax
			\typeout{** WARNING: IEEEtran.bst: No hyphenation pattern has been}%
			\typeout{** loaded for the language `#1'. Using the pattern for}%
			\typeout{** the default language instead.}%
			\else
			\language=\csname l@#1\endcsname
			\fi
			#2}}
	\providecommand{\BIBdecl}{\relax}
	\BIBdecl
	
	\bibitem{bar2004estimation}
	Y.~Bar-Shalom, X.~R. Li, and T.~Kirubarajan, \emph{Estimation with applications to tracking and navigation: theory algorithms and software}.\hskip 1em plus 0.5em minus 0.4em\relax Hoboken, NJ, USA: John Wiley \& Sons, 2004.
	
	\bibitem{vo2015multitarget}
	B.~Vo, M.~Mallick, Y.~Bar-Shalom, S.~Coraluppi, R.~Osborne, R.~Mahler, and B.~Vo, ``Multitarget tracking wiley encyclopedia of electrical and electronics engineering,'' \emph{Wiley Encyclopedia of Electrical and Electronics Engineering}, pp. 1--15, 2015.
	
	\bibitem{Sarkka13book}
	S.~Sarkka, \emph{Bayesian Filtering and Smoothing}.\hskip 1em plus 0.5em minus 0.4em\relax New York, NY, USA: Cambridge University Press, 2013.
	
	\bibitem{Li2017AGC}
	T.~Li, J.~Su, W.~Liu, and C.~JM, ``Approximate Gaussian conjugacy: parametric recursive filtering under nonlinearity, multimodality, uncertainty, and constraint, and beyond,'' \emph{Frontiers Inf Technol Electronic Eng}, vol.~18, pp. 1913--1939, 2017.
	
	\bibitem{li2005survey}
	X.~R. Li and V.~P. Jilkov, ``Survey of maneuvering target tracking. Part V. multiple-model methods,'' \emph{IEEE Trans. Aerosp. Electron. Syst.}, vol.~41, no.~4, pp. 1255--1321, 2005.
	
	\bibitem{Zhang24L-GBM}
	C.~Zhang, J.~Deng, and W.~Yi, ``Data-driven online tracking filter architecture: A lightgbm implementation,'' \emph{Signal Process.}, vol. 221, p. 109477, 2024.
	
	\bibitem{li2016effectiveness}
	T.~Li, J.~M. Corchado, J.~Bajo, S.~Sun, and J.~F. De~Paz, ``Effectiveness of bayesian filters: An information fusion perspective,'' \emph{Inf. Sci.}, vol. 329, pp. 670--689, 2016.
	
	\bibitem{fan2011impact}
	H.~Fan, Y.~Zhu, and Q.~Fu, ``Impact of mode decision delay on estimation error for maneuvering target interception,'' \emph{IEEE Trans. Aerosp. Electron. Syst.}, vol.~47, no.~1, pp. 702--711, 2011.
	
	\bibitem{xiang2018impact}
	S.~Xiang, H.~Fan, and Q.~Fu, ``Impact of mode decision delay on estimation error in continuous-time controlled system,'' \emph{IEEE Access}, vol.~6, pp. 73\,265--73\,272, 2018.
	
	\bibitem{zhang2022fast}
	C.~Zhang, J.~Deng, W.~Yi, and X.~Lu, ``A fast and robust maneuvering target tracking method without markov assumption,'' in \emph{Proc. FUSION 2022}, Link{\"o}ping, Sweden, July 2022, pp. 1--8.
	
	\bibitem{gao2019long}
	C.~Gao, J.~Yan, S.~Zhou, P.~K. Varshney, and H.~Liu, ``Long short-term memory-based deep recurrent neural networks for target tracking,'' \emph{Inf. Sci.}, vol. 502, pp. 279--296, 2019.
	
	\bibitem{Ghosh2024Dense}
	A.~Ghosh, A.~Honor\'e, and S.~Chatterjee, ``Danse: Data-driven non-linear state estimation of model-free process in unsupervised learning setup,'' \emph{IEEE Trans. Signal Process.}, vol.~72, pp. 1824--1838, 2024.
	
	\bibitem{Pinto2023DMTT}
	J.~Pinto, G.~Hess, W.~Ljungbergh, Y.~Xia, H.~Wymeersch, and L.~Svensson, ``Deep learning for model-based multiobject tracking,'' \emph{IEEE Trans. Aerosp. Electron. Syst.}, vol.~59, no.~6, pp. 7363--7379, 2023.
	
	\bibitem{Zhang24transformer}
	Y.~Zhang, G.~Li, X.-P. Zhang, and Y.~He, ``A deep learning model based on transformer structure for radar tracking of maneuvering targets,'' \emph{Inf. Fusion}, vol. 103, p. 102120, 2024.
	
	\bibitem{li2018joint}
	T.~Li, H.~Chen, S.~Sun, and J.~M. Corchado, ``Joint smoothing and tracking based on continuous-time target trajectory function fitting,'' \emph{IEEE Trans. Autom. Sci. Eng.}, vol.~16, no.~3, pp. 1476--1483, 2018.
	
	\bibitem{li2023target}
	T.~Li, Y.~Song, and H.~Fan, ``From target tracking to targeting track: A data-driven yet analytical approach to joint target detection and tracking,'' \emph{Signal Process.}, vol. 205, p. 108883, 2023.
	
	\bibitem{Talbot2024continuoustimestate}
	W.~Talbot, J.~Nubert, T.~Tuna, C.~Cadena, F.~Dümbgen, J.~Tordesillas, T.~D. Barfoot, and M.~Hutter, ``Continuous-time state estimation methods in robotics: A survey,'' 2024.
	
	\bibitem{li2016fitting}
	T.~Li, J.~Prieto, and J.~M. Corchado, ``Fitting for smoothing: A methodology for continuous-time target track estimation,'' in \emph{Proc. IPIN 2016}, Alcala de Henares, Spain, October 2016, pp. 1--8.
	
	\bibitem{Li25TFoT-part2}
	T.~Li, Y.~Song, G.~Li, and H.~Li, ``From target tracking to targeting track — \cal{P}art \cal{II}: Regularized polynomial trajectory optimization,'' \emph{Companion Paper, arXiv preprint}, 2025.
	
	\bibitem{Li25TFoT-part3}
	T.~Li, J.~Wang, G.~Li, and D.~Gao, ``From target tracking to targeting track — \cal{P}art \cal{III}: Stochastic process modeling and online learning,'' \emph{Companion Paper, arXiv preprint}, 2025.
	
	\bibitem{Wold1938}
	H.~Wold, \emph{A Study in the Analysis of Stationary Time Series}.\hskip 1em plus 0.5em minus 0.4em\relax Uppsala: Almqvist \& Wiksell, 1938.
	
	\bibitem{Cramer1961}
	H.~Cram\'er, ``On some classes of nonstationary stochastic processes,'' \emph{Proceedings of the Fourth Berkeley Symposium on Mathematical Statistics and Probability}, vol.~2, pp. 57--78, 1961.
	
	\bibitem{Box1994}
	G.~E.~P. Box and G.~M. Jenkins, \emph{Time Series Analysis: Forecasting and Control}.\hskip 1em plus 0.5em minus 0.4em\relax Upper Saddle River, NJ, United States: Prentice Hall PTR, 1994.
	
	\bibitem{Li25TFoT-part1}
	T.~Li, Y.~Song, H.~Fan, and J.~Chen, ``From target tracking to targeting track — \cal{P}art \cal{I}: A metric for spatio-temporal trajectory evaluation,'' \emph{Companion Paper, arXiv preprint}, 2025.
	
	\bibitem{Hodrick1997postwar}
	R.~J. Hodrick and E.~C. Prescott, ``Postwar u.s. business cycles: An empirical investigation,'' \emph{Journal of Money, Credit and Banking}, vol.~29, no.~1, pp. 1--16, 1997.
	
	\bibitem{Urbin2012time}
	G.~Urbin and S.~J. Koopman, \emph{Time Series Analysis by State Space Methods}.\hskip 1em plus 0.5em minus 0.4em\relax Oxford: Oxford University Press, 2012.
	
	\bibitem{Rudd94}
	J.~G. Rudd, R.~A. Marsh, and J.~A. Roecker, ``Surveillance and tracking of ballistic missile launches,'' \emph{IBM J. Res. Dev.}, vol.~38, no.~2, pp. 195--216, 1994.
	
	\bibitem{Wang94}
	Z.~Wang and H.~Zhou, ``Mathmetical processing to tracking data of range and range rate,'' \emph{Chin. Space Sci. Technol.}, vol.~3, pp. 17--24, 1994.
	
	\bibitem{Anderson-Sprecher96}
	R.~Anderson-Sprecher and R.~V. Lenth, ``Spline estimation of paths using bearings-only tracking data,'' \emph{J Am Stat Assoc.}, vol.~91, no. 433, pp. 276--283, 1996.
	
	\bibitem{Fan96LocalPolynomial}
	J.~Fan, \emph{Modelling and Its Applications: Monographs on Statistics and Applied Probability}.\hskip 1em plus 0.5em minus 0.4em\relax New York, NY, USA: Routledge, 1996.
	
	\bibitem{Tian22PolyFit}
	Z.~Tian, K.~Yang, M.~Danino, Y.~Bar-Shalom, and B.~Milgrom, ``Launch point estimation with a single passive sensor without trajectory state estimation,'' \emph{IEEE Trans. Aerosp. Electron. Syst.}, vol.~58, no.~1, pp. 318--327, 2022.
	
	\bibitem{li2018single}
	T.~Li, ``Single-road-constrained positioning based on deterministic trajectory geometry,'' \emph{IEEE Commun. Lett.}, vol.~23, no.~1, pp. 80--83, 2018.
	
	\bibitem{Hadzagic2011batchSpline}
	M.~Hadzagic and H.~Michalska, ``A bayesian inference approach for batch trajectory estimation,'' in \emph{Proc. FUSION 2011}, Chicago, IL, USA, 2011, pp. 1--8.
	
	\bibitem{Furgale12}
	P.~Furgale, T.~D. Barfoot, and G.~Sibley, ``Continuous-time batch estimation using temporal basis functions,'' in \emph{Proc. ICRA 2012}, Saint Paul, MN, USA, May 2012, pp. 2088--2095.
	
	\bibitem{Tirado2022Spline}
	J.~Tirado and J.~Civera, ``Jacobian computation for cumulative b-splines on se(3) and application to continuous-time object tracking,'' \emph{IEEE Robotics and Automation Letters}, vol.~7, no.~3, pp. 7132--7139, 2022.
	
	\bibitem{Pacholska20}
	M.~Pacholska, F.~Dümbgen, and A.~Scholefield, ``Relax and recover: Guaranteed range-only continuous localization,'' \emph{IEEE Robot. Autom. Lett.}, vol.~5, no.~2, pp. 2248--2255, 2020.
	
	\bibitem{beck2017first}
	A.~Beck, \emph{First-order methods in optimization}.\hskip 1em plus 0.5em minus 0.4em\relax Philadelphia, USA: Society for Industrial and Applied Mathematics, 2017.
	
	\bibitem{zhou2021newton}
	S.~Zhou, L.~Pan, and N.~Xiu, ``Newton method for $\ell_0$-regularized optimization,'' \emph{Numer. Algorithms}, pp. 1--30, 2021.
	
	\bibitem{attouch2013convergence}
	H.~Attouch, J.~Bolte, and B.~F. Svaiter, ``Convergence of descent methods for semi-algebraic and tame problems: proximal algorithms, forward--backward splitting, and regularized gauss--seidel methods,'' \emph{Math. Program.}, vol. 137, no.~1, pp. 91--129, 2013.
	
	\bibitem{beck2018proximal}
	A.~Beck and N.~Hallak, ``Proximal mapping for symmetric penalty and sparsity,'' \emph{SIAM J. Optim.}, vol.~28, no.~1, pp. 496--527, 2018.
	
	\bibitem{stoica2004model}
	P.~Stoica and Y.~Selen, ``Model-order selection: a review of information criterion rules,'' \emph{IEEE Signal Processing Magazine}, vol.~21, no.~4, pp. 36--47, 2004.
	
	\bibitem{Ye21Cholesky}
	Q.~Ye, A.~A. Amini, and Q.~Zhou, ``Optimizing regularized cholesky score for order-based learning of bayesian networks,'' \emph{IEEE Trans. Pattern Anal. Mach. Intell.}, vol.~43, no.~10, pp. 3555--3572, 2021.
	
	\bibitem{steven1993fundamentals}
	M.~K. Steven, \emph{Fundamentals of statistical signal processing}.\hskip 1em plus 0.5em minus 0.4em\relax Englewood Cliffs, NJ: PTR Prentice-Hall, 1993, vol.~10, no. 151045.
	
	\bibitem{some2020zhao}
	C.~Zhao, Z.~Luo, and N.~Xiu, ``Some advances in theory and algorithms for sparse optimization,'' \emph{Oper. Res. Trans.}, vol.~24, no.~4, pp. 1--24, 2020.
	
	\bibitem{kim2007interior}
	S.-J. Kim, K.~Koh, M.~Lustig, S.~Boyd, and D.~Gorinevsky, ``An interior-point method for large-scale $\ell_1 $-regularized least squares,'' \emph{IEEE J. Sel. Top. Signal Process.}, vol.~1, no.~4, pp. 606--617, 2007.
	
	\bibitem{xu2012lell}
	Z.~Xu, X.~Chang, F.~Xu, and H.~Zhang, ``$\ell_{1/2}$ regularization: A thresholding representation theory and a fast solver,'' \emph{IEEE Trans. Neural Netw. Learn. Syst.}, vol.~23, no.~7, pp. 1013--1027, 2012.
	
	\bibitem{fung2011equivalence}
	G.~Fung and O.~Mangasarian, ``Equivalence of minimal $\ell_0$-and $\ell_p$-norm solutions of linear equalities, inequalities and linear programs for sufficiently small $p$,'' \emph{J. Optim. Theory Appl.}, vol. 151, pp. 1--10, 2011.
	
	\bibitem{blumensath2008iterative}
	T.~Blumensath and M.~E. Davies, ``Iterative thresholding for sparse approximations,'' \emph{J. Fourier Anal. Appl.}, vol.~14, pp. 629--654, 2008.
	
	\bibitem{he2020optimal}
	B.~He, F.~Ma, and X.~Yuan, ``Optimal proximal augmented lagrangian method and its application to full jacobian splitting for multi-block separable convex minimization problems,'' \emph{IMA J. Numer. Anal.}, vol.~40, no.~2, pp. 1188--1216, 2020.
	
	\bibitem{blumensath2009iterativecompressed}
	T.~Blumensath and M.~E. Davies, ``Iterative hard thresholding for compressed sensing,'' \emph{Appl. Comput. Harmon. Anal.}, vol.~27, no.~3, pp. 265--274, 2009.
	
	\bibitem{cheng2020active}
	W.~Cheng, Z.~Chen, and Q.~Hu, ``An active set barzilar--borwein algorithm for $\ell_0$ regularized optimization,'' \emph{J. Glob. Optim.}, vol.~76, no.~4, pp. 769--791, 2020.
	
	\bibitem{ito2013variational}
	K.~Ito and K.~Kunisch, ``A variational approach to sparsity optimization based on lagrange multiplier theory,'' \emph{Inverse Probl.}, vol.~30, no.~1, p. 015001, 2013.
	
	\bibitem{boyd2004convex}
	S.~Boyd and L.~Vandenberghe, \emph{Convex optimization}.\hskip 1em plus 0.5em minus 0.4em\relax Cambridge, CB2 8RU, UK: Cambridge university press, 2004.
	
	\bibitem{Liu2020Optimization}
	H.~Liu, J.~Hu, Y.~Li, and Z.~Wen, \emph{Optimization: Modeling, Algorithm and Theory}.\hskip 1em plus 0.5em minus 0.4em\relax Beijing, China: Higher Education Press, 2020.
	
	\bibitem{Stefano2022DA}
	S.~Marano, P.~Braca, L.~M. Millefiori, P.~Willett, and W.~D. Blair, ``When the closest targets make the difference: An analysis of data association errors,'' \emph{IEEE Open J. Signal Process.}, vol.~3, pp. 372--386, 2022.
	
	\bibitem{lou2018fast}
	Y.~Lou and M.~Yan, ``Fast l1--l2 minimization via a proximal operator,'' \emph{J. Sci. Comput.}, vol.~74, no.~2, pp. 767--785, 2018.
	
	\bibitem{hartikainen2008optimal}
	J.~Hartikainen, A.~Solin, and S.~S{\"a}rkk{\"a}, ``Optimal filtering with kalman filters and smoothers--a manual for matlab toolbox ekf/ukf,'' \emph{Biomed. Eng.}, pp. 1--57, 2008.
	
	\bibitem{schuhmacher2008consistent}
	D.~Schuhmacher, B.-T. Vo, and B.-N. Vo, ``A consistent metric for performance evaluation of multi-object filters,'' \emph{IEEE Trans. Signal Process.}, vol.~56, no.~8, pp. 3447--3457, 2008.
	
\end{thebibliography}
\end{document}